\documentclass[]{article}
\usepackage[a4paper, total={6in, 8in}]{geometry}
\usepackage{authblk}

\usepackage{amsfonts}
\usepackage{amsmath}
\usepackage{amssymb}

\usepackage{mathtools} 

\usepackage[normalem]{ulem}

\usepackage{bm}
\usepackage{cancel}
\usepackage{todonotes}

\usepackage[citestyle=numeric,bibstyle=ieee,sorting=none,backend=biber,url=false,isbn=false]{biblatex}
\AtEveryBibitem{
	\clearfield{urlyear}
	\clearfield{urlmonth}
}
\usepackage{hyperref}
 
\usepackage{braket}

\usepackage[most]{tcolorbox}

\usepackage{amsthm}
\newtheorem{theorem}{Theorem}

\newtheorem{proposition}[theorem]{Proposition}
\newtheorem{corollary}[theorem]{Corollary}

\newtheorem{remark}[theorem]{Remark}

\theoremstyle{definition}
\newtheorem{example}[theorem]{Example}

\usepackage{environ}

\usepackage{tabularx}
\newcolumntype{C}{>{$}c<{$}}
\newcolumntype{L}{>{$}l<{$}}
\newcolumntype{R}{>{$}r<{$}}

\usepackage{multirow}

\renewcommand{\Re}{\mathrm{Re}}

\title{Perfect transmission and parallel composition for quantum walks on graphs with two leads}
\author[1]{Allan John Gerrard\footnote{gerrarda$\otimes$fc.jwu.ac.jp}}
\affil[1]{\it Department of Mathematics, Physics and Computer Science, Japan Women's University, 2--8--1 Mejirodai, Bunkyo-ku, Tokyo 112--8681, Japan}

\author[2]{Ryo Asaka}
\affil[2]{\it Department of Physics, Tokyo University of Science, Kagurazaka 1--3, Shinjuku-ku, 162--8601, Tokyo, Japan}

\author[2]{Kazumitsu Sakai}

\begin{document}

\maketitle

\begin{abstract}
    We study scattering for continuous-time quantum walks on finite graphs with
    two attached leads.  We derive explicit formulae for the two-terminal
    scattering matrix in terms of characteristic polynomials of the finite graph
    and its vertex-deleted subgraphs.  For real-weighted two-terminal 
    graphs, we then introduce three real
    quantities, $\mu_1$, $\mu_2$, and $\nu$,
    which are each additive under
    parallel composition of graphs.  In these variables, perfect 
    transmission at fixed momentum is characterized
    by the condition $\mu_1=\mu_2$ together with a hyperbola in the corresponding
    $(\mu,\nu)$-plane, whose points determine the transmission phase. 
    This turns the search for graphs with prescribed
    transmission properties into a geometric vector-sum problem for 
    smaller building blocks.
\end{abstract}

\section{Introduction}

    Continuous-time quantum walks on graphs with attached leads give a simple
    model of quantum scattering.  A finite graph is connected to semi-infinite
    paths, and an incoming particle is scattered by the finite part of the graph.
    The scattering matrix describes the reflection and transmission amplitudes.
    Childs and his collaborators \cite{childsUniversalComputationQuantum2009,childsUniversalComputationMultiparticle2013} showed that universal quantum computation can be
    implemented in this model by using finite graphs as scattering
    gadgets realizing quantum gates.
    More precisely, the transmission amplitudes of the corresponding
    scattering matrices give the matrix elements of the gates.  Thus, in
    gate applications,
    one is naturally led to search for graphs with perfect transmission at the
    operating momentum.

    This scattering viewpoint has also motivated possible physical realizations.
    For example, LC telegrapher circuits have been proposed as realizations of
    quantum-walk scattering gadgets for universal quantum gates
    \cite{ezawaElectricCircuitsUniversal2020}. Related quantum walks with tails, 
    in particular Szegedy walks, also have
    electric-circuit interpretations in terms of Kirchhoff-type laws and currents
    \cite{higuchiElectricCircuitInduced2020}.

    A difficulty is that the graph itself is the device.  Once the operating
    momentum is fixed, there is no continuously tunable external parameter.  The scattering
    amplitudes are determined by the structure of the finite graph.  Therefore, to construct a
    graph with prescribed scattering properties, one usually has to search through
    large families of graphs. Graphs which do not exhibit perfect transmission
    at the operating momentum are discarded.

    In this paper, we consider scattering on graphs with two terminals.  Although
    a general quantum gate usually requires more leads, two-terminal graphs
    already appear in phase gates
    \cite{childsUniversalComputationQuantum2009},
    momentum filters \cite{childsUniversalComputationQuantum2009}, and momentum
    switches \cite{childsMomentumSwitches2015}.  We derive formulae for the
    two-terminal scattering matrix in terms of characteristic polynomials of the adjacency matrix of the 
    finite graph and its vertex-deleted subgraphs.

    We then introduce three quantities $\mu_1$, $\mu_2$, and $\nu$, defined from
    these characteristic polynomials.  These quantities play a role similar to
    admittance in an electrical circuit.  In terms of these quantities, the condition for perfect
    transmission at a fixed momentum takes a simple form:
    $\mu_1=\mu_2$, together with a hyperbola in the corresponding
    $(\mu,\nu)$-plane.  The point on this hyperbola determines the
    phase of transmission.

    The main advantage of this parametrization appears when graphs are combined
    in parallel.  For parallel graphs, the quantities $\mu_1$, $\mu_2$, and
    $\nu$ are additive.  Hence, each small two-terminal graph gives a vector in
    the $(\mu_1,\mu_2,\nu)$-space, and parallel composition corresponds to vector
    addition.  Instead of searching directly through large graphs, one can then
    search for combinations of smaller building blocks whose vector sum satisfies
    the perfect-transmission condition.  In the symmetric case, this means that
    the sum satisfies $\mu_1=\mu_2$ and lies on the perfect-transmission hyperbola
    in the $(\mu,\nu)$-plane.  Thus, the graph search problem is reduced to a
    geometric vector-sum problem, giving a systematic way to construct graphs with
    prescribed transmission properties, such as momentum filters and phase gates.

    Characteristic polynomials are known to be effective tools for analysis of quantum walks, as evidenced by their use in quantum walks on finite graphs.  In that
    setting, vertex-deleted characteristic polynomials appear in exact formulae
    for spectral idempotents, which in turn control transition amplitudes.  This
    approach has been used to study state transfer, average mixing, and
    irrational quantum walks \cite{coutinhoIrrationalQuantumWalks2023}, as well as the effect
    of cut vertices and bridges on state transfer \cite{coutinhoQuantumWalksNot2022}.
    The present paper applies similar graph-polynomial data to a different
    problem, namely scattering on graphs with attached leads, and adapts it to
    the study of perfect transmission and parallel composition.
    As well as this, \cite{cottrellSimpleMethodFinding2015} makes use of characteristic polynomials to calculate $S$-matrix elements in the context of a quantum walk defined on edge states. 

    The paper is organized as follows.  In Section~2, we recall the scattering
    formalism for continuous-time quantum walks on graphs with leads and introduce
    the Schur-complement expression for the scattering matrix.  In Section~3, the
    characteristic-polynomial formulae for the two-terminal scattering matrix and
    the perfect-transmission condition are derived. In Section~4, the quantities
    $\mu_1$, $\mu_2$, and $\nu$ are introduced, the perfect-transmission
    hyperbola is described, and their additivity for parallel graphs is proved.
    The relation with admittance, poles, effective length, and series composition
    is also discussed.  In Section~5, we give some examples.  Section~6 contains
    the conclusion and discussion.
    Some supplementary material supporting the main text is deferred
    to the appendix.

\section{Preliminaries}
    Let $G$ denote a connected, edge-weighted graph with vertex set $V$ and weight function $\omega: V \times V \to \mathbb{C}$ satisfying $\omega(u,v)=\omega(v,u)^*$.
    Let $U \subset V$ denote a distinguished set of labelled vertices called \textit{terminals}, which are labelled $1,2, \dots, N$, with $N=|U|$.
    Let $\widehat{G}$ denote the infinite graph constructed by attaching semi-infinite paths to each of the terminals; these semi-infinite paths will be referred to as \textit{leads}.

    We consider a one-particle quantum mechanical system with a basis of position states corresponding to the vertices of graph $\widehat{G}$.
    In particular, denote by $\ket{\xi}$ the state where a particle is localised at $\xi \in V$, and similarly $\ket{j,x}$ for a particle at position $x$ along lead $j$, with $\ket{j,0}=\ket{j}$ referring to the terminal vertex of lead $j$ for $1 \leq j \leq N$.

    The Hamiltonian of the model, $H(\widehat{G})$, will simply be the (weighted) adjacency matrix of $\widehat{G}$. 
    Let $H(G)$ denote the adjacency matrix of $G$; that is, the matrix $(\omega(u,v))_{u,v \in V(G)}$, a finite Hermitian matrix.
    We write its block decomposition as
    \[
        H(G) = \begin{pmatrix}
            H(U) & B(G)^\dagger \\ B(G) & H(G \setminus U)
        \end{pmatrix}.
    \]

    Following \cite{childsLevinsonsTheoremGraphs2012}, let us define a scattering state $\ket{\psi_j(k)}$ with incoming particle along lead $j$ with momentum $k$ as the energy eigenstate of energy $\epsilon(k) = 2\cos(k)$ and
    \begin{equation} \label{smatrix-def}
        \braket{j',x |\psi_j(k)} = \delta_{j,j'} e^{-ikx} + [S(G,e^{ik})]_{j,j'}e^{ikx}.
    \end{equation}
    It was shown in \cite{childsLevinsonsTheoremGraphs2012} that
    \[
        [S(G,z)]_{j,j'} = -\left[\gamma(G,z^{-1})^{-1}\gamma(G,z)\right]_{j,j'},
    \]
    where
    \[
        \gamma(G,z) :=
        \begin{pmatrix}
            zI - H(U) & -B^\dagger \\ -B & (z+z^{-1})I-H(G\setminus U)
        \end{pmatrix}.
    \]

    We will use this to obtain a simpler expression of the $S$-matrix. 
    Let us apply the identity
    \[
        I-\gamma(G,z^{-1})^{-1}\gamma(G,z)
        =
        \gamma(G,z^{-1})^{-1}\left(\gamma(G,z^{-1})-\gamma(G,z) \right)
        =
        \gamma(G,z^{-1})^{-1}\begin{pmatrix}
            (z^{-1}-z)I & 0 \\ 0 & 0
        \end{pmatrix}.
    \]
    Looking at the left-hand side, the upper-left block of this expression is equal to $I+S(G,z)$.
    As such,
    \[
        [S(G,z)]_{j,j'}=-\delta_{j,j'}-(z-z^{-1})[\gamma(G,z^{-1})^{-1}]_{j,j'}.
    \]
    Next, we introduce the Schur complement of $\gamma(G,z)$ with respect to the $G\setminus U$,
    \begin{equation}\label{schur-comp}
            Q(G,z) := zI-H(U)-B^\dagger ((z+z^{-1})I-H(G\setminus U))^{-1} B,
    \end{equation}
    Then, using the Schur complement formula for the inverse of a block matrix, we obtain the following:
    \begin{equation} \label{smatrix-schur-comp}
        S(G,z) = - I -(z-z^{-1})Q(G,z^{-1})^{-1} .
    \end{equation}
    We will use this form of the $S$-matrix in later calculations.

    We also note the following transformation of the $S$-matrix. 
    The graph $G$ can be extended ``up lead $a$'' by redefining the location of the terminal to be $(a,1)$, and 
    shifting the index of vertices on lead $a$ by one: $(a,x) \mapsto (a,x-1)$.
    Let us denote the new graph by $G'$.
    The vertex labelled by $(a,0)$ in $G$ becomes an unlabelled vertex inside $G'$.
    This has no physical meaning, but changes the definition of the $S$-matrix.
    See Figure~\ref{fig:up-the-lead} for a visual description of this process.
    From the definition \eqref{smatrix-def}, it is clear that the effect on the $S$ matrix is the phase shift
    \begin{equation} \label{shift-terminals}
        [S(G,z)]_{j,j'} \mapsto [S(G',z)]_{j,j'} = z^{\delta_{aj}}z^{\delta_{aj'}}[S(G,z)]_{j,j'}.
    \end{equation}
    If the terminal $a$ is a leaf in the finite graph $G$ then the reverse operation is valid, and has the effect  $[S(G,z)]_{j,j'} \mapsto [S(G',z)]_{j,j'} = z^{-\delta_{aj}}z^{-\delta_{aj'}}[S(G,z)]_{j,j'}$.

\begin{figure}[ht]
    \centering
    \includegraphics{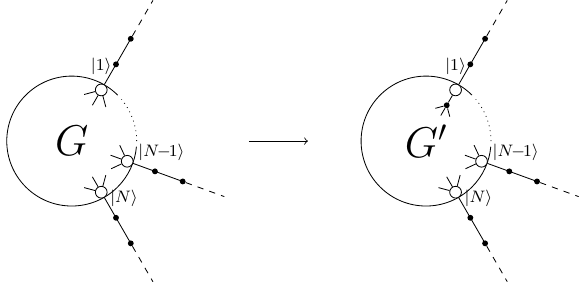}
    \caption{Visual description of the operation of extending the graph $G$ up the first lead.}
    \label{fig:up-the-lead}
\end{figure}

\section{The \texorpdfstring{$S$}{S}-matrix and characteristic polynomials}

    In what follows, it will be convenient to extend the graph $G$ up each lead by one position; we will denote this graph $G^{+1}$.
    In $G^{+1}$, the terminals are not connected, and are each adjacent to only one internal vertex of the graph (the terminal of the original $G$).
    Recall from \eqref{shift-terminals} that the $S$-matrices are related by
    \[
        [S(G^{+1},z)]_{j,j'} = z^2[S(G,z)]_{j,j'}.
    \]
    The effect of extending the graph from $G$ to $G^{+1}$ is to simplify the expression for $Q(G^{+1},z^{-1})$ \eqref{smatrix-schur-comp}.
    Indeed, since the terminal vertices of $G^{+1}$ are only adjacent to the terminal vertices of $G$, the matrix $B(G^{+1})^\dagger ((z+z^{-1})I-H(G))^{-1}B(G^{+1})$ picks out the elements of the resolvent matrix corresponding to the terminal vertices of the original $G$.
    Thus, the Schur complement formula \eqref{schur-comp} simplifies to
    \begin{equation} \label{Q-Gplus}
        Q(G^{+1},z^{-1})_{j,j'} = z^{-1}\delta_{j,j'} - \left(((z+z^{-1})I-H(G))^{-1}\right)_{j,j'}.
    \end{equation}

    The resolvent matrix appearing in \eqref{Q-Gplus} is also known as the walk-generating function, and has been extensively studied in e.g. \cite{godsilAlgebraicCombinatorics1993}.
    Introduce the characteristic polynomials of the adjacency matrix,
    \[
        \phi_{G}(y) = \det\left(y I - H(G)\right).
    \]
    The diagonal elements of the resolvent matrix admit the following expression:
    \begin{equation}
        \left((yI-H(G))^{-1}\right)_{j,j} = \frac{\det\left(yI-\widehat{H(G)}_{j,j}\right)}{\det\left(yI-H(G)\right)} = \frac{\phi_{G\setminus j}(y)}{\phi_{G}(y)},
    \end{equation}
    where $\phi_{G}(y)$ is the characteristic polynomial of the adjacency matrix of $G$ and $G\setminus j$ is the graph $G$ with the vertex $j$ and any edges attached to $j$ removed. 

    Further, from \cite{godsilToolsLinearAlgebra1989} (originally appearing in \cite{tutteAllKingsHorses}) we have the following identity for the off-diagonal elements:
    \begin{equation} \label{offdiag-formula}
        \left((yI-H(G))^{-1}\right)_{j,j'}\left((yI-H(G))^{-1}\right)_{j'j} = \frac{\phi_{G\setminus j}(y)\phi_{G\setminus j'}(y)-\phi_{G}(y)\phi_{G\setminus j\setminus j'}(y)}{\phi_{G}(y)^2}.
    \end{equation}
    We will only be interested in the case $y \in \mathbb{R}$.
    Since $H(G)$ is Hermitian, we have
    \begin{equation}
        \left|\left((yI-H(G))^{-1}\right)_{j,j'}\right|^2= \frac{\phi_{G\setminus j}(y)\phi_{G\setminus j'}(y)-\phi_{G}(y)\phi_{G\setminus j\setminus j'}(y)}{\phi_{G}(y)^2}.
    \end{equation}
    Thus, the value is determined up to a phase.
    For a real symmetric $H(G)$, i.e. when the graph is unweighted or has real weights,
    \begin{equation}
        \left((yI-H(G))^{-1}\right)_{j,j'}= \frac{\sqrt{\phi_{G\setminus j}(y)\phi_{G\setminus j'}(y)-\phi_{G}(y)\phi_{G\setminus j\setminus j'}(y)}}{\phi_{G}(y)} .
    \end{equation}
    As shown in \cite{godsilAlgebraicCombinatorics1993}, the sign of this square root can be discerned from an alternative form of the numerator, namely $\sum_{\mathcal{P}(G)_{j,j'}} \omega(P) \phi_{G\setminus P}(y)$, where $\mathcal{P}(G)_{j,j'}$ is the set of paths from $j$ to $j'$ in $G$, and $\omega(P)$ denotes the product of edge weights along edges in $P$. 
    As such, the sign of the square root is unambiguous. 
    
    Returning to $Q(G^{+1},z)$, we have 
    \begin{gather} \label{Qelements}
        \left[Q(G^{+1},z^{-1})\right]_{j,j} =
        z^{-1} - \frac{\phi_{G\setminus j}}{\phi_{G}}\bigg|_{z+z^{-1}};
        \\
        \left|\left[Q(G^{+1},z^{-1})\right]_{j,j'}\right| = 
        \frac{\sqrt{\phi_{G\setminus j}\phi_{G\setminus j'}-\phi_{G}\phi_{G\setminus j\setminus j'}}}{\phi_{G}} \bigg|_{z+z^{-1}}. \label{Qoffdiag}
    \end{gather}

\subsection{Two-terminal case}
    Returning to \eqref{smatrix-schur-comp}, in order to calculate the $S$-matrix, it remains to take the inverse of $Q(G^{+1},z^{-1})$.
    In the case of two terminals, the inverse matrix is simple to evaluate.
    Now, while general quantum gates require at least four terminals, two-terminal graphs appear in \cite{childsUniversalComputationQuantum2009} as momentum filters and momentum separators.

    \begin{proposition}
        For a graph $G$ with two terminals,
        \begin{align}
            \label{S11}
            S_{11}(G,z) &= -\frac{\phi_{G} - z^{-1}\phi_{G\setminus 1} - z\phi_{G\setminus 2}
            +\phi_{G\setminus 1\setminus 2} }{\phi_{G}
            - z\left(\phi_{G\setminus 1}+\phi_{G\setminus 2}\right)
            +z^2\phi_{G\setminus 1\setminus 2} }
            \\
            \label{S22}
            S_{22}(G,z) &= -\,\frac{\phi_{G} - z^{-1}\phi_{G\setminus 2}- z\phi_{G\setminus 1}
            +\phi_{G\setminus 1\setminus 2} }{\phi_{G}
            - z\left(\phi_{G\setminus 1}+\phi_{G\setminus 2}\right)
            +z^2\phi_{G\setminus 1\setminus 2} },
        \end{align}
        and the transmission probability is
        \begin{equation}
            \label{S12-prob}
            |S_{12}(G,z)|^2 = |S_{21}(G,z)|^2 = |z-z^{-1}|^2\frac{\phi_{G\setminus 1}\phi_{G\setminus 2}-\phi_{G}\phi_{G\setminus 1\setminus 2}}{|\phi_{G}
            - z\left(\phi_{G\setminus 1}+\phi_{G\setminus 2}\right)
            +z^2\phi_{G\setminus 1\setminus 2}|^2},
        \end{equation}
        where each characteristic polynomial is evaluated at $z+z^{-1}$.
        Further, if the weights of $G$ are all real, then 
        \begin{equation}
            \label{S12}
            S_{12}(G,z) = S_{21}(G,z) = (z-z^{-1})\frac{\sqrt{\phi_{G\setminus 1}\phi_{G\setminus 2}-\phi_{G}\phi_{G\setminus 1\setminus 2}}}{\phi_{G}
            - z\left(\phi_{G\setminus 1}+\phi_{G\setminus 2}\right)
            +z^2\phi_{G\setminus 1\setminus 2} }.
        \end{equation}
    \end{proposition}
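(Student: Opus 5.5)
The plan is to apply \eqref{smatrix-schur-comp} to the extended graph $G^{+1}$, invert the $2\times 2$ matrix $Q(G^{+1},z^{-1})$ explicitly, and then undo the extension using $S(G^{+1},z)=z^2S(G,z)$. Write $y=z+z^{-1}$, and let $R=(yI-H(G))^{-1}$ denote the resolvent restricted to the two terminals. By \eqref{Q-Gplus} we have
\[
Q:=Q(G^{+1},z^{-1})=\begin{pmatrix} z^{-1}-R_{11} & -R_{12}\\ -R_{21} & z^{-1}-R_{22}\end{pmatrix}.
\]
Here $R_{jj}=\phi_{G\setminus j}/\phi_G$, and by \eqref{offdiag-formula}
\[
R_{12}R_{21}=\frac{\phi_{G\setminus1}\phi_{G\setminus2}-\phi_G\phi_{G\setminus1\setminus2}}{\phi_G^2}.
\]

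First I compute the determinant. The key point is that $\det Q$ depends on the off-diagonal entries only through the product $R_{12}R_{21}$, so the phase ambiguity of the individual entries drops out. Expanding,
\[
\det Q=z^{-2}-z^{-1}(R_{11}+R_{22})+R_{11}R_{22}-R_{12}R_{21}.
\]
After substituting, the terms $\phi_{G\setminus1}\phi_{G\setminus2}/\phi_G^2$ cancel, and this collapses to
\[
\det Q=\frac{z^{-2}\bigl(\phi_G-z(\phi_{G\setminus1}+\phi_{G\setminus2})+z^2\phi_{G\setminus1\setminus2}\bigr)}{\phi_G}.
\]
This cancellation is what produces the common denominator $D$ appearing in all the formulas.

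Next I treat the diagonal entries. We have $[Q^{-1}]_{11}=(z^{-1}-R_{22})/\det Q$, so
\[
S_{11}(G^{+1},z)=-1-(z-z^{-1})\frac{z^2(z^{-1}\phi_G-\phi_{G\setminus2})}{D}.
\]
Dividing by $z^2$ gives $S_{11}(G,z)$. Putting everything over $D$, the numerator of $-S_{11}(G,z)$ becomes
\[
z^{-2}D+(z-z^{-1})(z^{-1}\phi_G-\phi_{G\setminus2}).
\]
Expanding this, the $\phi_G$ terms combine to $\phi_G$, the $\phi_{G\setminus1}$ terms to $-z^{-1}\phi_{G\setminus1}$, the $\phi_{G\setminus2}$ terms to $-z\phi_{G\setminus2}$, and the last term is $\phi_{G\setminus1\setminus2}$. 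This is exactly \eqref{S11}. The formula \eqref{S22} follows by swapping the roles of the labels $1$ and $2$.

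For the off-diagonal entries,
\[
S_{12}(G,z)=-z^{-2}(z-z^{-1})\frac{R_{12}}{\det Q}=-(z-z^{-1})\frac{R_{12}\,\phi_G}{D},
\]
and similarly for $S_{21}$ with $R_{21}$. The remaining subtle point is the modulus. For $|z|=1$, $y$ is real, and since $H(G)$ is Hermitian, $R_{21}=\overline{R_{12}}$. Hence $|R_{12}|^2\phi_G^2$ is given by the numerator in \eqref{offdiag-formula}, which yields \eqref{S12-prob} with $|S_{12}|=|S_{21}|$.

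In the real-weighted case, $R$ is a real symmetric matrix, so $R_{12}=R_{21}$ and the two off-diagonal entries of $S$ coincide. Substituting the signed square-root expression for $R_{12}$, with the sign fixed by the path expansion as discussed above, gives \eqref{S12}. The overall sign should be checked against the convention used for $S$ and the sign of the path sum. This sign-tracking step, together with confirming the determinant cancellation, is where I expect the main care to be needed. Everything else is routine algebra.
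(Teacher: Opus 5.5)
Your route is the same as the paper's: pass to $G^{+1}$, invert the $2\times 2$ matrix $Q(G^{+1},z^{-1})$, then divide by $z^2$. Your determinant computation, your derivation of \eqref{S11} and \eqref{S22}, and \eqref{S12-prob} for $|z|=1$ are all correct. So is $S_{12}=S_{21}$ in the real-weighted case. The determinant step is also slightly cleaner than the paper's: you use the product identity \eqref{offdiag-formula} rather than the modulus \eqref{Qoffdiag}, so it holds for all $z$, not only on the unit circle.

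The step that does not go through is the last one, and the sign you flagged is a real discrepancy. Your intermediate formula $S_{12}(G,z)=-(z-z^{-1})R_{12}\phi_G/D$ is right, where $D$ is the common denominator. The paper fixes the sign of the root by $\phi_G R_{12}=\sum_{P}\omega(P)\phi_{G\setminus P}$. Substituting this gives
\[
S_{12}(G,z)=(z^{-1}-z)\,\frac{\sqrt{\phi_{G\setminus 1}\phi_{G\setminus 2}-\phi_{G}\phi_{G\setminus 1\setminus 2}}}{D},
\]
which is the negative of \eqref{S12}.

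Check this on a single edge of weight $1$ joining the two terminals. There the root equals $1$ and $D=z^{-2}-1$, so \eqref{S12} gives $S_{12}=-z$. But the plane wave on the resulting doubly infinite path gives $S_{11}=0$ and $S_{12}=z=e^{ik}$. That agrees with the paper's own later claim that a path of length $\ell$ transmits with phase $e^{ik\ell}$.

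The paper's proof reaches \eqref{S12} in two steps. It correctly writes $S_{12}(G^{+1},z)=(z-z^{-1})[Q]_{12}/\det Q$. It then substitutes $[Q]_{12}=+\sqrt{\cdots}/\phi_G$, whereas \eqref{Q-Gplus} gives $[Q]_{12}=-R_{12}$, so a minus sign is silently dropped.

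You therefore cannot conclude that the substitution ``gives \eqref{S12}''. Your argument proves \eqref{S12} only if the square root there is read as minus the path sum, or equivalently with the prefactor $z^{-1}-z$. The same sign issue carries into \eqref{s12-munu} and \eqref{pt-phase}.
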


    \begin{proof}
        From \eqref{Qelements} and \eqref{Qoffdiag}, we have
        \begin{align*}
            \det\left(Q(G^{+1},z^{-1})\right) &= \left(\left(z^{-1}\phi_{G} - \phi_{G\setminus 1}\right)
            \left(z^{-1}\phi_{G} - \phi_{G\setminus 2}\right)
            - \left(\phi_{G\setminus 1}\phi_{G\setminus 2}-\phi_{G}\phi_{G\setminus 1\setminus 2}\right)\right)/\phi_{G}^2
            \\
            &=\left(z^{-2}\phi_{G}^2
            - z^{-1}\phi_{G}\left(\phi_{G\setminus 1}+\phi_{G\setminus 2}\right)
            +\phi_{G}\phi_{G\setminus 1\setminus 2}\right)/\phi_{G}^2
            \\
            &=z^{-2}\left(\phi_{G}
            - z\left(\phi_{G\setminus 1}+\phi_{G\setminus 2}\right)
            +z^2\phi_{G\setminus 1\setminus 2} \right)/\phi_{G},
        \end{align*}
        where the functional dependence on $z+z^{-1}$ is suppressed.
        Then, from \eqref{smatrix-schur-comp},
        \[
            S_{12}(G^{+1},z) = (z-z^{-1}) \frac{[Q(G^{+1},z^{-1})]_{12}}{\det Q(G^{+1},z^{-1})}.
        \]
        This gives both the results for the off-diagonal elements. 
        For the diagonal elements,
        \begin{align*}
            S_{11}(G^{+1},z) &= -1-\frac{z^2(z-z^{-1})(z^{-1}\phi_{G} - \phi_{G\setminus 2})}{\phi_{G}
            - z\left(\phi_{G\setminus 1}+\phi_{G\setminus 2}\right)
            +z^2\phi_{G\setminus 1\setminus 2} }
            \\
            &= -z^2 \frac{\phi_{G} - z\phi_{G\setminus 2}- z^{-1}\phi_{G\setminus 1}
            +\phi_{G\setminus 1\setminus 2} }{\phi_{G}
            - z\left(\phi_{G\setminus 1}+\phi_{G\setminus 2}\right)
            +z^2\phi_{G\setminus 1\setminus 2} }.
        \end{align*}
        The expression for $S_{22}(G^{+1},z)$ is obtained by swapping $1$ and $2$.
        Finally, we use \eqref{shift-terminals} to obtain the expressions for the elements of $S(G,z)$ from $S(G^{+1},z)$.
    \end{proof}

    \begin{remark}
        For a graph with both terminals located on the same vertex, the reflection and transmission coefficients are obtained by setting $\phi_{G\setminus 2}=\phi_{G\setminus 1}$ and $\phi_{G\setminus 1 \setminus 2}=0$:
        \begin{equation} \label{same-vertex-terminal}
            S_{11}(G,z) = -\frac{\phi_{G} - (z+z^{-1})\phi_{G\setminus 1}}{\phi_{G}
            - 2z\phi_{G\setminus 1}};
            \qquad
            S_{12}(G,z) = (z-z^{-1})\frac{\phi_{G\setminus 1}}{\phi_{G}
            - 2z\phi_{G\setminus 1}}.
        \end{equation}
    \end{remark}

    \subsection{Perfect transmission}

    For a graph to be applicable to quantum computing, it should be the case that it exhibits perfect transmission at the operating momentum $k$. 
    In the case of two-terminal graphs, this is simply the condition that $|S_{12}(G,e^{ik})|=1$, or equivalently, $S_{11}(G,e^{ik})=S_{22}(G,e^{ik})=0$.
    In what follows, we investigate explicitly when a graph exhibits perfect transmission at momentum $k$. 

    \begin{proposition} \label{prop:pt}
        Let $k \in (-\pi,0)$ and $\epsilon=2\cos(k)$.
        The following conditions are necessary for a graph $G$ with two terminals to have perfect transmission at momentum $k$:
        \begin{itemize}
            \item $\phi_{G\setminus1}(\epsilon)=\phi_{G\setminus2}(\epsilon)$; and
            \item $\phi_{G}(\epsilon)-\epsilon\phi_{G\setminus1}(\epsilon)+\phi_{G\setminus1\setminus 2}(\epsilon) = 0$.
        \end{itemize}
        If, further, at least one of $\phi_{G}(\epsilon), \phi_{G\setminus1}(\epsilon)$ or $\phi_{G\setminus1\setminus 2}(\epsilon)$ are nonzero then graph $G$ has perfect transmission at momentum $k$. 
    \end{proposition}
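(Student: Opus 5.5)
The plan is to work entirely with the explicit formulae of the preceding proposition, writing $\phi,\phi_1,\phi_2,\phi_{12}$ for $\phi_G,\phi_{G\setminus1},\phi_{G\setminus2},\phi_{G\setminus1\setminus2}$ evaluated at $\epsilon=z+z^{-1}$, where $z=e^{ik}$. All four numbers are real because $\epsilon$ is real and $H(G)$ is Hermitian. Since $k\in(-\pi,0)$, $\sin k\neq 0$, so $z\notin\mathbb{R}$. Denote the numerators of \eqref{S11} and \eqref{S22} by $N_1=\phi-z^{-1}\phi_1-z\phi_2+\phi_{12}$ and $N_2$ (the same with $1\leftrightarrow2$), and the common denominator by $D=\phi-z(\phi_1+\phi_2)+z^2\phi_{12}$. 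Perfect transmission means $S_{11}=S_{22}=0$, equivalently $|S_{12}|=1$.

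\textbf{Step 1 (reduce $N_1=0$ to the two conditions).} Splitting $N_1$ into real and imaginary parts gives
\[
\Re N_1=\phi-\cos k\,(\phi_1+\phi_2)+\phi_{12},\qquad \mathrm{Im}\,N_1=\sin k\,(\phi_1-\phi_2).
\]
Since $\sin k\neq0$, $N_1=0$ holds iff $\phi_1=\phi_2$ and $\phi-\epsilon\phi_1+\phi_{12}=0$; here I use $2\cos k=\epsilon$. These are exactly the two bulleted conditions. By symmetry they also force $N_2=0$.

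\textbf{Step 2 (sufficiency).} Assume the two conditions. Substituting $\phi=(z+z^{-1})\phi_1-\phi_{12}$ into $D$ should give
\[
D=(z-z^{-1})(z\phi_{12}-\phi_1).
\]
If $D=0$, then $\phi_1=z\phi_{12}$ with $\phi_1,\phi_{12}$ real and $z$ non-real. This forces $\phi_1=\phi_{12}=0$, and then $\phi=0$ as well. So under the extra nonvanishing hypothesis $D\neq0$, the formulae \eqref{S11} and \eqref{S22} are genuine values, and $S_{11}=S_{22}=0$.

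\textbf{Step 3 (necessity).} This is the main obstacle, because if $D=0$ one cannot simply read off "$S_{11}=0\Rightarrow N_1=0$". I would avoid this by proving the polynomial unitarity identity
\[
|D|^2-|z-z^{-1}|^2\,\Delta=|N_1|^2,\qquad \Delta:=\phi_1\phi_2-\phi\,\phi_{12},
\]
by direct expansion with $|z|=1$. This is routine but is the one computation to check carefully. I expect it to hold, since it is the identity underlying $|S_{11}|^2+|S_{12}|^2=1$ via \eqref{S12-prob}.

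\begin{itemize}
\item If $D\neq0$, perfect transmission gives $|S_{12}|=1$. By \eqref{S12-prob} the left-hand side of the identity vanishes, so $N_1=0$.
\item If $D=0$, the identity reads $-|z-z^{-1}|^2\Delta=|N_1|^2$. Now $\Delta\ge0$, because by \eqref{offdiag-formula} it equals $\phi^2\bigl|((\epsilon I-H(G))^{-1})_{12}\bigr|^2$, and it is nonnegative by continuity in $\epsilon$ even where $\phi$ vanishes. Hence both sides vanish, and again $N_1=0$.
\end{itemize}

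In either case Step 1 yields the two stated conditions.
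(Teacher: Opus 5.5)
Your proof is correct. Steps 1 and 2 are exactly the paper's argument: the paper also splits the numerator of $S_{11}$ at $z=e^{ik}$ into real and imaginary parts to get the two conditions. Under these conditions it rewrites the denominator as $2i\sin(k)\,(e^{ik}\phi_{G\setminus1\setminus2}-\phi_{G\setminus1})$, which is your $(z-z^{-1})(z\phi_{12}-\phi_1)$.

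The real difference is in Step 3, where necessity must cope with a possibly vanishing denominator. The paper handles this in one line. $N_1$ and $D$ are Laurent polynomials in $z$ with no poles on the unit circle, so $S_{11}(e^{ik})=0$ forces the order of vanishing of $N_1$ at $e^{ik}$ to exceed that of $D$. In particular $N_1(e^{ik})=0$.

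You instead prove the unitarity identity $|D|^2=|N_1|^2+4\sin^2(k)\,\Delta$ and combine it with $\Delta\ge0$. The identity does hold:
\[
|D|^2-|N_1|^2=\sin^2(k)\bigl[(\phi_1+\phi_2)^2-(\phi_1-\phi_2)^2\bigr]+2(\cos 2k-1)\,\phi\,\phi_{12}=4\sin^2(k)\,\Delta .
\]

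The paper's route is shorter. However, at a common zero of $N_1$ and $D$ it tacitly takes $S_{11}(e^{ik})$ to be the limit of the rational function. Your route never needs that interpretation, and it gives extra information: for $k\in(-\pi,0)$, $D$ can vanish only where $N_1$ does. By your Step 2, that happens only when all four characteristic polynomials vanish at $\epsilon$, and then the two conditions hold automatically.

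The cost is the extra expansion and the positivity of $\Delta$. Your continuity argument for $\Delta\ge0$ works; alternatively, $\Delta=|\mathrm{adj}(\epsilon I-H(G))_{12}|^2$ by Jacobi's identity.

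Finally, in the case $D\neq0$ the identity is superfluous, since $S_{11}=-N_1/D=0$ already gives $N_1=0$.
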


    \begin{proof}
        We will consider the numerator and denominator polynomials separately.
        Let 
        \begin{align*}
            n_{11}(z) &= \phi_{G}(z+z^{-1}) - z^{-1}\phi_{G\setminus 1}(z+z^{-1}) - z\phi_{G\setminus 2}(z+z^{-1})
            +\phi_{G\setminus 1\setminus 2}(z+z^{-1})
            \\
            d(z) &= \phi_{G}(z+z^{-1})
            - z\left(\phi_{G\setminus 1}(z+z^{-1})+\phi_{G\setminus 2}(z+z^{-1})\right)
            +z^2\phi_{G\setminus 1\setminus 2}(z+z^{-1}),
        \end{align*}
        so $S_{11}(G,z) = -n_{11}(z)/d(z)$. 
        Observe that the $n_{11}(z)$ and $d(z)$ are pole-free for $z$ on the unit circle, as the only possible pole is located at $z=0$.
        Hence, perfect transmission at momentum $k$ is achieved if and only if the order of the zero in the numerator exceeds that in the denominator at $z=e^{ik}$. 
        For there to be a zero in the numerator at all, we must have 
        \begin{equation} \label{pt-poly-nec}
            \phi_{G} - \cos(k) \left(\phi_{G\setminus 1} + \phi_{G\setminus 2}\right) + \phi_{G\setminus1\setminus 2}
            +
            i \sin(k) \left(\phi_{G\setminus 1} - \phi_{G\setminus 2}\right) =0,
        \end{equation}
        with each characteristic polynomial evaluated at $2\cos(k)$. 
        Since $H(G)$ and its submatrices are Hermitian, the characteristic polynomials are real for real inputs. 
        As such we obtain the two necessary conditions from the real and imaginary parts of the equation.
        
        Now, supposing the numerator is indeed zero at $z=e^{ik}$, to achieve perfect transmission, it is sufficient that the denominator is nonzero at this value.
        The denominator expression is 
        \[
            d(e^{ik})=\phi_{G} - \cos(k) \left(\phi_{G\setminus 1} + \phi_{G\setminus 2}\right) + \cos(2k)\phi_{G\setminus1\setminus 2}
            +
            i \left(\sin(2k)\phi_{G\setminus1\setminus 2}- \sin(k) \left(\phi_{G\setminus 1} + \phi_{G\setminus 2}\right) \right).
        \]
        If \eqref{pt-poly-nec} satisfied, then this may be written as 
        \[
            d(e^{ik})=(\cos(2k)-1)\phi_{G\setminus1\setminus 2}
            +
            2i\sin(k) \left(\cos(k)\phi_{G\setminus1\setminus 2}- \phi_{G\setminus 1} \right).
        \]
        Again, the characteristic polynomials are real, and $k \in (-\pi,0)$, so the denominator will be zero if and only if both $\phi_{G\setminus1\setminus 2}(2\cos(k))=0$ and $\phi_{G\setminus1}(2\cos(k))=0$. 
        With condition \eqref{pt-poly-nec}, this would imply that all four characteristic polynomials are zero at $2\cos(k)$. 
        Contrapositively, if one of these polynomials is nonzero then the denominator is nonzero, and so we have perfect transmission.
    \end{proof}

    \begin{corollary}
        If the graph $G$ with real weights has perfect transmission at momentum $k \in (-\pi,0)$ and one of $\phi_{G}, \phi_{G\setminus1}, \phi_{G\setminus2}$ or $\phi_{G\setminus1\setminus 2}$ are nonzero at $\epsilon = 2\cos(k)$ then the transmission phase is equal to
        \begin{equation} \label{S12-charpoly}
            S_{12}(G,e^{ik}) = \frac{e^{-ik}\phi_{G\setminus 1\setminus 2}(\epsilon)-\phi_{G\setminus 1}(\epsilon)}{\left|e^{-ik}\phi_{G\setminus 1\setminus 2}(\epsilon)-\phi_{G\setminus 1}(\epsilon)\right|}.
        \end{equation}
    \end{corollary}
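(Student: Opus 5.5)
The plan is to substitute $z=e^{ik}$ directly into the real-weight formula \eqref{S12} and simplify. We use the two necessary conditions of Proposition~\ref{prop:pt} to rewrite both the numerator and the denominator in terms of $\phi_{G\setminus1}$ and $\phi_{G\setminus1\setminus2}$ only. Throughout, all characteristic polynomials are evaluated at $\epsilon=2\cos(k)$.

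\textbf{Step 1: nondegeneracy.} Since $G$ has perfect transmission, Proposition~\ref{prop:pt} gives $\phi_{G\setminus1}=\phi_{G\setminus2}$ and $\phi_G=\epsilon\phi_{G\setminus1}-\phi_{G\setminus1\setminus2}$. The nonvanishing hypothesis on one of $\phi_G,\phi_{G\setminus1},\phi_{G\setminus2},\phi_{G\setminus1\setminus2}$ then forces $\phi_{G\setminus1}$ and $\phi_{G\setminus1\setminus2}$ not both to vanish. As shown in the proof of Proposition~\ref{prop:pt}, this is exactly what makes the denominator $d(e^{ik})$ nonzero, so \eqref{S12} may be evaluated at $z=e^{ik}$.

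\textbf{Step 2: factor the denominator.} Recall from that proof that, under the perfect-transmission conditions,
\[
d(e^{ik})=(\cos(2k)-1)\phi_{G\setminus1\setminus2}+2i\sin(k)\left(\cos(k)\phi_{G\setminus1\setminus2}-\phi_{G\setminus1}\right).
\]
Writing $\cos(2k)-1=-2\sin^2(k)=2i\sin(k)\cdot i\sin(k)$, this factors as
\[
d(e^{ik})=2i\sin(k)\left(e^{ik}\phi_{G\setminus1\setminus2}-\phi_{G\setminus1}\right).
\]
The prefactor $z-z^{-1}=2i\sin(k)$ in \eqref{S12} cancels exactly, leaving
\[
S_{12}(G,e^{ik})=\frac{\sqrt{N}}{e^{ik}\phi_{G\setminus1\setminus2}-\phi_{G\setminus1}},\qquad N:=\phi_{G\setminus1}\phi_{G\setminus2}-\phi_G\phi_{G\setminus1\setminus2}.
\]

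\textbf{Step 3: simplify the numerator.} Substitute the two conditions into $N$:
\[
N=\phi_{G\setminus1}^2-\epsilon\phi_{G\setminus1}\phi_{G\setminus1\setminus2}+\phi_{G\setminus1\setminus2}^2=\left|e^{ik}\phi_{G\setminus1\setminus2}-\phi_{G\setminus1}\right|^2,
\]
where the last equality uses $2\Re(e^{ik})=\epsilon$ and the fact that the polynomials are real at real arguments. Hence $\sqrt{N}=|e^{-ik}\phi_{G\setminus1\setminus2}-\phi_{G\setminus1}|$. Multiplying numerator and denominator by the complex conjugate $e^{-ik}\phi_{G\setminus1\setminus2}-\phi_{G\setminus1}$ then gives \eqref{S12-charpoly}.

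\textbf{Main obstacle: the sign of the square root.} The algebra above is routine; the delicate point is that the paper fixes the sign of the square root via the path expansion $\sum_{\mathcal{P}(G)_{1,2}}\omega(P)\phi_{G\setminus P}$. Step 3 only identifies $N$, so a priori it yields $\sqrt{N}=\pm|e^{ik}\phi_{G\setminus1\setminus2}-\phi_{G\setminus1}|$. I would first check that the stated formula corresponds to taking the nonnegative root, consistent with how \eqref{S12} is written with $\sqrt{\cdot}$. I would then remark that if the path-sum numerator is negative at $\epsilon$, the phase acquires an overall factor $-1$. Pinning this down in general requires tracking the sign of $\sum_{P}\omega(P)\phi_{G\setminus P}(\epsilon)$, which I expect to need either as an explicit convention or as a separate argument.
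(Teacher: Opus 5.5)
Your Steps 1--3 reproduce the paper's argument. The one difference is that the paper obtains $|\sqrt{N}|=|w|$, with $w:=e^{ik}\phi_{G\setminus1\setminus2}-\phi_{G\setminus1}$, from $|S_{12}|=1$, whereas you compute $N=|w|^2$ directly, which is slightly more self-contained. The paper then finishes exactly as you propose, by tacitly reading the root in \eqref{S12} as nonnegative.

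Your suspicion about that step is justified. But the sign is not a matter of convention, and it goes the opposite way to what you suggest. In deriving \eqref{S12}, the entry $[Q(G^{+1},z^{-1})]_{12}$ was replaced by $+\sqrt{N}/\phi_G$. In fact \eqref{Q-Gplus} and the path-sum rule give
$[Q(G^{+1},z^{-1})]_{12}=-[((z+z^{-1})I-H(G))^{-1}]_{12}=-p/\phi_G$, where $p(y)=\sum_{P\in\mathcal{P}(G)_{12}}\omega(P)\phi_{G\setminus P}(y)$. Tracking this sign gives $S_{12}(G,z)=-(z-z^{-1})\,p(z+z^{-1})/d(z)$. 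Your Steps 2 and 3 then yield
\[
S_{12}(G,e^{ik})=-\frac{p(\epsilon)}{w}=\operatorname{sgn}\bigl(p(\epsilon)\bigr)\,\frac{\phi_{G\setminus1}(\epsilon)-e^{-ik}\phi_{G\setminus1\setminus2}(\epsilon)}{\left|\phi_{G\setminus1}(\epsilon)-e^{-ik}\phi_{G\setminus1\setminus2}(\epsilon)\right|},
\]
with $p(\epsilon)\neq0$ because $p(\epsilon)^2=N=|w|^2\neq0$. So the claimed formula holds exactly when $p(\epsilon)<0$ and is off by a factor $-1$ when $p(\epsilon)>0$. You had the two cases reversed.

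This defeats the statement itself, not just the proof, because the statement contains no square root whose branch could be chosen. Take $G$ to be a single edge of weight $1$.

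\begin{itemize}
\item For every $k\in(-\pi,0)$ we have $\phi_{G\setminus1}=\phi_{G\setminus2}=\epsilon$, $\phi_G-\epsilon\phi_{G\setminus1}+\phi_{G\setminus1\setminus2}=0$ and $\phi_{G\setminus1\setminus2}=1$. So all hypotheses hold.
\item The right-hand side of the claimed formula is $e^{-ik}-\epsilon=-e^{ik}$.
\item But $\widehat G$ is the doubly infinite path. The plane wave equal to $e^{-ikx}$ at $(1,x)$ and $e^{ik(x+1)}$ at $(2,x)$ is an eigenvector of energy $2\cos(k)$, so \eqref{smatrix-def} gives $S_{11}=0$ and $S_{12}=e^{ik}$.
\end{itemize}

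The same value comes directly from $-\gamma(G,z^{-1})^{-1}\gamma(G,z)$. It is also what the paper's own path example uses: phase $e^{ik\ell}$, with $\nu=-1$ listed for $\ell=1$, whereas the path-sum sign would give $\nu=+1$. For an edge of weight $-1$, where $p=-1$, the claimed formula is correct, as the corrected version predicts.

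So no analysis of the sign can complete your proof of the statement as written. With the sign of $[Q]_{12}$ tracked, your argument proves the corrected formula displayed above instead.
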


    \begin{proof}
        From \eqref{S12}, we observe that $\sqrt{\phi_{G\setminus 1}\phi_{G\setminus 2}-\phi_{G}\phi_{G\setminus 1\setminus 2}}$ is a real number, so the phase is entirely determined by the denominator of $S_{12}(z)$, provided the denominator is nonzero. 
        As above, we write $d(z)$ for this denominator.
        When the necessary conditions for perfect transmission are satisfied,
        \begin{align*}
            d(e^{ik})&=\phi_{G}
                - \cos(k)\left(\phi_{G\setminus 1}+\phi_{G\setminus 2}\right)
                +\cos(2k)\phi_{G\setminus 1\setminus 2} 
                +i \left(\sin(2k)\phi_{G\setminus 1\setminus 2} 
                - \sin(k)\left(\phi_{G\setminus 1}+\phi_{G\setminus 2}\right) \right)
                \\
                &=\left(\cos(2k)-1\right)\phi_{G\setminus 1\setminus 2} 
                +i \left(\sin(2k)\phi_{G\setminus 1\setminus 2} 
                - \sin(k)\left(\phi_{G\setminus 1}+\phi_{G\setminus 2}\right) \right)
                \\
                &=2\sin(k)\left(-\sin(k)\phi_{G\setminus 1\setminus 2} 
                +i \left(\cos(k)\phi_{G\setminus 1\setminus 2} 
                - \phi_{G\setminus 1} \right)\right)
                \\
                &=2i\sin(k)
                \left(e^{ik}\phi_{G\setminus 1\setminus 2} 
                - \phi_{G\setminus 1} \right).
        \end{align*}
        We see that the factor of $2i\sin(k)$ cancels with that in the numerator, so
        \[
            S_{12}(G,e^{ik}) = \frac{\sqrt{\phi_{G\setminus 1}\phi_{G\setminus 2}-\phi_{G}\phi_{G\setminus 1\setminus 2}} 
            }{
                e^{ik}\phi_{G\setminus 1\setminus 2} 
                - \phi_{G\setminus 1}
            }.
        \]
        The result follows from the fact that $|S_{12}(G,e^{ik})|=1$. 
        
    \end{proof}

\section{Graphs in parallel} \label{sec:parallel}

In this section, we recast the $S$-matrix elements in terms of a different 
set of quantities, which resemble the admittance of an electrical circuit. 
These will allow us to derive the $S$-matrix elements for a two-terminal graph comprised of parallel two-terminal graphs in a convenient form. 
We then give a geometric characterisation of the requirements for perfect transmission in terms of these quantities.
\begin{center}
    \it From now on, assume that $G$ has real weights.
\end{center}

\subsection{\texorpdfstring{The quantities $\mu$ and $\nu$}{The quantities mu and nu}} \label{ssec:munu}

Let us define the rational functions
\begin{equation} \label{mu-def}
    \mu_1(G,y) := y- \frac{\phi_{G\setminus 1}(y)}{\phi_{G\setminus 1\setminus 2}(y)},
    \qquad 
    \mu_2(G,y) := y- \frac{\phi_{G\setminus 2}(y)}{\phi_{G\setminus 1\setminus 2}(y)},
\end{equation}
and
\begin{equation} \label{nu-def}
    \nu(G,y) := \frac{\sqrt{\phi_{G\setminus 1}(y)\phi_{G\setminus 2}(y)-\phi_{G}(y)\phi_{G\setminus 1 \setminus 2}(y)}}{\phi_{G\setminus 1 \setminus 2}(y)},
\end{equation}
for any $y\in \mathbb{R}$ for which $\phi_{G\setminus 1 \setminus 2}(y) \neq 0$.
Note that these three quantities are real numbers; moreover, if the graph weights are rational then they can each be written as a ratio of polynomials in $y$ with integer coefficients. 

We will want to think of these three quantities as characterising the properties of a given graph $G$, at a given momentum $k$. 
Hence, where there is no ambiguity, we will write $\mu_1$ for $\mu_1(G, 2\cos(k))$, and so on. 
In terms of these quantities, the $S$-matrix elements are given by
\begin{equation} \label{s11-munu}
    S_{11}(G,e^{ik}) = - \frac{(\mu_1-e^{-ik})(\mu_2-e^{ik})-\nu^2}{(\mu_1-e^{-ik})(\mu_2-e^{-ik})-\nu^2}; \qquad S_{22}(G,e^{ik}) = - \frac{(\mu_1-e^{ik})(\mu_2-e^{-ik})-\nu^2}{(\mu_1-e^{-ik})(\mu_2-e^{-ik})-\nu^2};
\end{equation}
\begin{equation} \label{s12-munu}
    S_{12}(G,e^{ik}) = \frac{2i\sin(k) \, \nu}{(\mu_1-e^{-ik})(\mu_2-e^{-ik})-\nu^2}.
\end{equation}

We can now recast the results of the previous section on perfect transmission in terms of the quantities $\mu_1$, $\mu_2$ and $\nu$. 
We assume for now that $\phi_{G\setminus 1 \setminus 2}(2\cos(k))\neq 0$.
Then the necessary and sufficient conditions for perfect transmission stated in Proposition~\ref{prop:pt} are equivalent to 
\begin{gather}
    \mu_1 = \mu_2 =: \mu \label{symm-eq}
    \\
    \left(\frac{\nu}{\sin(k)}\right)^2 -  \left(\frac{\mu-\cos(k)}{\sin(k)}\right)^2 = 1 \label{pt-hyp}.
\end{gather}

In other words, two-terminal graphs exhibiting perfect transmission draw out a hyperbola in the $\mu$-$\nu$ plane.
Further, we parametrise the hyperbola by
\begin{equation} \label{pt-param-munu}
    \mu = \cos(k)-\sin(k)\cot(\theta) = \frac{\sin(\theta-k)}{\sin(\theta)}
    \qquad 
    \nu = -\sin(k)\csc(\theta) = -\frac{\sin(k)}{\sin(\theta)},
\end{equation}
for $\theta \in (-\pi, 0) \cup (0, \pi)$.
We have reused the symbol $\theta$ as one can confirm that this is none other than the argument of $S_{12}(G,e^{ik})$; that is,
\begin{equation} \label{pt-phase}
    S_{12}(G,e^{ik}) = \frac{\nu}{\mu-e^{-ik}} = e^{i\theta}.
\end{equation}
Figure \ref{fig:hyperbola-phase} gives a geometric picture of the relationship between $\theta$, $\mu$ and $\nu$.

\begin{figure}[ht]
    
    \centering

    \includegraphics{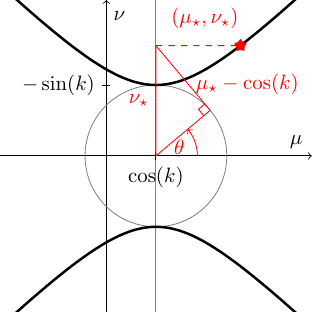}

    \caption{
        The hyperbola \eqref{pt-hyp} in the $\mu$-$\nu$ plane. The points of the hyperbola are parametrised by the phase of transmission $\theta \in (-\pi,0) \cup (0,\pi)$.
        }
    \label{fig:hyperbola-phase}

\end{figure}

\begin{remark} \label{rem:perfect-reflection}
    For certain applications, such as the R/T gadget introduced in \cite{childsMomentumSwitches2015}, it is 
    important to also consider perfect reflection. 
    We observe that the condition for perfect reflection is simply $\nu=0$ --  that is, the equivalent of the perfect transmission hyperbola for perfect reflection is the $\mu$-axis. In this case, the reflection coefficients are 
    \begin{equation}
        S_{11}(e^{ik}) = -\frac{\mu_2 - e^{ik}}{\mu_2 - e^{-ik}}
        \quad \text{and} \quad
        S_{22}(e^{ik}) = -\frac{\mu_1 - e^{ik}}{\mu_1 - e^{-ik}}.
    \end{equation}
\end{remark}

\subsection{Graphs in parallel}

Consider a sequence of graphs $(G_j)_{j=1}^n$, each with two terminal vertices labelled $1$ and $2$ respectively, and no other shared nodes; in other words, $G_{j_1} \cap G_{j_2} = \{1,2\}$. 
We will refer to their union as the parallel graph,
\[
    G_{\parallel} := \bigcup_{j=1}^n G_j.
\]
The edge weight of the edge connecting $1$ and $2$ in the parallel graph is defined to be the sum $\omega_{G_{\parallel}}(1,2) = \omega_{G_1}(1,2) + \dots + \omega_{G_n}(1,2)$. 
We will demonstrate a number of properties of the parallel graph. 
First, note that removing vertices $1$ and $2$ disconnects the graphs, so
\[
    \phi_{G_\parallel \setminus 1 \setminus 2}(y) = \prod_{j=1}^n \phi_{G_j \setminus 1 \setminus 2}(y).
\]

\begin{proposition}
    Let $(G_j)_{j=1}^n$ be a sequence of graphs with real weights such that $G_{j_1} \cap G_{j_2} = \{1,2\}$ for any $1\leq j_1,j_2 \leq n$. 
    Then 
    \[
        \mu_1(G_{\parallel},y) = \sum_{j=1}^n \mu_1(G_j,y),
        \quad 
        \mu_2(G_{\parallel},y) = \sum_{j=1}^n \mu_2(G_j,y)
        \quad 
        \text{and}
        \quad 
        \nu(G_{\parallel},y) = \sum_{j=1}^n \nu(G_j,y).
    \]
\end{proposition}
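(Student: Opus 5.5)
The plan is to identify $\mu_1,\mu_2,\nu$ with the entries of a $2\times 2$ Schur complement onto the terminals, and then observe that this Schur complement is manifestly additive under parallel composition. For a two-terminal graph $G$ with $U=\{1,2\}$, define, in analogy with \eqref{schur-comp},
\[
    M(G,y) := yI - H(U) - B^\dagger\left(yI - H(G\setminus U)\right)^{-1}B ,
\]
which is defined whenever $\phi_{G\setminus1\setminus2}(y)\neq 0$. The first step is to show that
\[
    M(G,y) = yI - \begin{pmatrix} \mu_2 & \nu \\ \nu & \mu_1 \end{pmatrix},
\]
where $\mu_1,\mu_2,\nu$ are evaluated at $(G,y)$ (up to a fixed sign convention in the off-diagonal entry, which I treat below).

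To prove this, I will first assume that $\phi_G(y)\neq 0$. By the Schur complement inversion formula, $M(G,y)^{-1}$ is the $\{1,2\}$ block $R$ of the resolvent $(yI-H(G))^{-1}$. From the diagonal formula and \eqref{offdiag-formula},
\[
    R=\phi_G^{-1}\begin{pmatrix}\phi_{G\setminus1}& r\\ r&\phi_{G\setminus2}\end{pmatrix},
    \qquad r=\sqrt{\phi_{G\setminus1}\phi_{G\setminus2}-\phi_G\phi_{G\setminus1\setminus2}} .
\]
Its determinant simplifies to $\phi_{G\setminus1\setminus2}/\phi_G$. Inverting the $2\times2$ matrix then gives
\[
    M=\phi_{G\setminus1\setminus2}^{-1}\begin{pmatrix}\phi_{G\setminus2}&-r\\-r&\phi_{G\setminus1}\end{pmatrix}.
\]
Reading off the entries gives $M_{11}=y-\mu_2$, $M_{22}=y-\mu_1$ and $M_{12}=-\nu$, by \eqref{mu-def} and \eqref{nu-def}. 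Both sides are rational functions of $y$ that agree away from the finitely many zeros of $\phi_G$. Hence they agree wherever $\phi_{G\setminus1\setminus2}(y)\ne0$, and the assumption $\phi_G(y)\neq 0$ can be removed.

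The second step is additivity of $yI-M$. For the parallel graph, $G_\parallel\setminus U$ is the disjoint union of the $G_j\setminus U$. Therefore $yI-H(G_\parallel\setminus U)$ is block diagonal, and $B(G_\parallel)$ is the vertical stack of the $B(G_j)$. It follows that
\[
    B^\dagger(yI-H(G_\parallel\setminus U))^{-1}B=\sum_j B_j^\dagger(yI-H(G_j\setminus U))^{-1}B_j .
\]
Also $H_{G_\parallel}(U)=\sum_j H_{G_j}(U)$: the $1$--$2$ weights add by definition, and any weights at the terminals themselves are taken to add as well (for simple graphs they are zero). Hence $yI-M(G_\parallel,y)=\sum_j\bigl(yI-M(G_j,y)\bigr)$. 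Comparing entries yields the three additivity statements, at every $y$ where all the $\phi_{G_j\setminus1\setminus2}(y)$ are nonzero. This is exactly where $\phi_{G_\parallel\setminus1\setminus2}(y)\neq0$, by the product formula above.

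The main obstacle is the sign of $\nu$: the paper fixes the square root in \eqref{nu-def} through the path expansion of the off-diagonal resolvent entry. I will argue that $M_{12}=-\phi_G R_{12}/\phi_{G\setminus1\setminus2}$ holds as an exact rational identity. Here $\phi_G R_{12}$ is precisely the signed numerator $\sum_{P}\omega(P)\phi_{G\setminus P}$, which is the paper's chosen branch. Hence $M_{12}=-\nu$ holds with the correct sign, not merely up to sign. Since the Schur-complement entries themselves add, with no square roots involved, the signed quantities $\nu(G_j,y)$ add as well.
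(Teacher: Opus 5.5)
Your proof is correct, and it takes a genuinely different route from the paper's.

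\textbf{The paper's route.} The paper argues combinatorially. It expands $\mu_1$ by Schwenk's recursion at vertex $2$ of $G\setminus 1$, which gives a sum over neighbours of $2$ and cycles through $2$. It expands the numerator of $\nu$ by the path-sum formula \eqref{sumofpaths}. It then notes that every such neighbour, cycle or $1$--$2$ path of $G_\parallel$ lies in a single $G_j$, with the direct edge absorbed by the summed weight. Dividing by $\phi_{G_\parallel\setminus1\setminus2}=\prod_j\phi_{G_j\setminus1\setminus2}$ then splits each sum term by term.

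\textbf{Your route.} You identify $yI-M(G,y)=H(U)+B^\dagger\bigl(yI-H(G\setminus U)\bigr)^{-1}B$, the Schur complement (Kron reduction) onto the terminals, with $\begin{pmatrix}\mu_2&\nu\\ \nu&\mu_1\end{pmatrix}$. All three additivity statements then follow at once from the block-diagonal interior.

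\textbf{What each buys.} Your route needs no Schwenk formula and makes the admittance remark literal. It also ties directly into the scattering formalism: with $y=z+z^{-1}$ one has $Q(G,z^{-1})=z^{-1}I-\begin{pmatrix}\mu_2&\nu\\ \nu&\mu_1\end{pmatrix}$. The series formulas \eqref{mu-series}--\eqref{nu-series} are then just the elimination of the shared vertex. It defines $\nu$ as $-M_{12}$ without any square root, and $-M_{12}$ remains additive for complex Hermitian weights. The paper's route gives explicit graph-theoretic expansions of $\mu_1$ and $\nu$. It also works with polynomial identities throughout, so it never needs the generic assumption $\phi_G(y)\neq0$.

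\textbf{Dropping that assumption in your argument.} $M_{11}$ and $M_{22}$ are one-vertex Schur complements of $yI-H(G\setminus2)$ and $yI-H(G\setminus1)$. Hence they equal $\phi_{G\setminus2}/\phi_{G\setminus1\setminus2}$ and $\phi_{G\setminus1}/\phi_{G\setminus1\setminus2}$ directly. Then $\det M=\phi_G/\phi_{G\setminus1\setminus2}$ gives $M_{12}^2=\nu^2$. Only the sign of $M_{12}$ is left to the adjugate (path-sum) identity.

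\textbf{A caution if you continue to the $S$-matrix.} With the path-sum sign for $\nu$, \eqref{smatrix-schur-comp} gives $S_{12}=-2i\sin(k)\,\nu/\bigl[(\mu_1-e^{-ik})(\mu_2-e^{-ik})-\nu^2\bigr]$. You can check this on the single edge: $\mu_1=\mu_2=0$, $\nu=1$, and $S_{12}=e^{ik}$. So the global sign of $\nu$ has to be matched against \eqref{s12-munu}. This has no bearing on additivity.
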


\begin{proof}
    It will be sufficient to prove that the functions are additive for a pair of graphs in parallel.
    For a path $P$, let $\omega(P)$ denote the product of edge weights of edges in that path. 

    Let us begin with the proof for $\mu_1$; in the case of $\mu_1$, the statement is also true for complex weights, as we show in what follows.
    The characteristic polynomial for a graph can be expressed in terms of those of its subgraphs via the Schwenk formula \cite{schwenkComputingCharacteristicPolynomial1974}, which is
    \[
        \phi_{G}(y) = y \phi_{G \setminus u}(y)
        - \sum_{\substack{v \in \mathcal{V}(G) \\ v \sim u}} |\omega(u,v)|^2 \phi_{G \setminus u \setminus v}(y)
        - 2\sum_{C \in \mathcal{C}(G) | u \in C} \Re\left(\omega(C)\right) \phi_{G \setminus C}(y),
    \]
    where $\mathcal{C}(G)$ is the set of cycles in $G$.
    For completeness, we include a proof of this statement in Appendix~\ref{schwenk}.
    Recasting this in terms of $\mu_1(G,y)$ yields
    \[
        \mu_1(G,y) = \sum_{\substack{v \in \mathcal{V}(G\setminus 1) \\ v \sim 2}} |\omega(2,v)|^2 
        \frac{\phi_{G \setminus 1 \setminus 2 \setminus v}(y)}{\phi_{G \setminus 1 \setminus 2}(y)}
        +2\sum_{C \in \mathcal{C}(G \setminus 1) | 2 \in C} \Re\left(\omega(C)\right) \frac{\phi_{G \setminus 1 \setminus C}(y)}{\phi_{G \setminus 1 \setminus 2}(y)}.
    \]
    For $G_{\parallel}$, each of these sums decomposes into equivalent sums over the subgraphs $G_1$ and $G_2$. 
    Upon doing so, we observe cancellations such as, for $v \in G_2$, 
    \[
        \frac{\phi_{G \setminus 1 \setminus 2 \setminus v}(y)}{\phi_{G \setminus 1 \setminus 2}(y)}
        =
        \frac{\phi_{G_1 \setminus 1 \setminus 2}(y)\phi_{G_2 \setminus 1 \setminus 2 \setminus v}(y)}{\phi_{G_1 \setminus 1 \setminus 2}(y)\phi_{G_2 \setminus 1 \setminus 2}(y)}
        =
        \frac{\phi_{G_2 \setminus 1 \setminus 2 \setminus v}(y)}{\phi_{G_2 \setminus 1 \setminus 2}(y)},
    \]
    which results in the following:
    \[
        \sum_{\substack{v \in \mathcal{V}(G\setminus 1) \\ v \sim 2}} |\omega(2,v)|^2 
        \frac{\phi_{G \setminus 1 \setminus 2 \setminus v}(y)}{\phi_{G \setminus 1 \setminus 2}(y)}
        =
        \sum_{\substack{v \in \mathcal{V}(G_1\setminus 1) \\ v \sim 2}} |\omega(2,v)|^2 
        \frac{\phi_{G_1 \setminus 1 \setminus 2 \setminus v}(y)}{\phi_{G_1 \setminus 1 \setminus 2}(y)}
        +
        \sum_{\substack{v \in \mathcal{V}(G_2\setminus 1) \\ v \sim 2}} |\omega(2,v)|^2 
        \frac{\phi_{G_2 \setminus 1 \setminus 2 \setminus v}(y)}{\phi_{G_2 \setminus 1 \setminus 2}(y)}.
    \]
    The same argument holds for the sum over cycles, as each cycle in $G_{\parallel}\setminus 1$ must be fully contained in one of $G_1$ or $G_2$.
    As a result, we obtain the result for $\mu_1(G_{\parallel},y)$. 
    The result for $\mu_2(G_{\parallel},y)$ follows by symmetry. 

    For $\nu(G,y)$, we make use of the sum-over-paths formula \cite[Chapter 4, Corollary 2.2]{godsilAlgebraicCombinatorics1993} \cite{schwenkRemovalcospectralSetsVertices1979},
    \begin{equation}\label{sumofpaths}
        \sqrt{\phi_{G \setminus u}(y) \phi_{G\setminus v}(y)-\phi_{G}(y) \phi_{G\setminus u\setminus v}(y)} = \sum_{P \in \mathcal{P}(G)_{uv}} \omega(P) \phi_{G\setminus P}(y).
    \end{equation}
    Here $\mathcal{P}(G_{\parallel})_{uv}$ is the set of paths from 1 to 2 in $G_{\parallel}$.
    For $G_{\parallel}$, the sum decomposes as follows:
    \[
        \sum_{P \in \mathcal{P}(G_{\parallel})_{12}} \omega(P) \phi_{G_{\parallel}\setminus P}(y)
        =
        \sum_{P \in \mathcal{P}(G_1)_{12}'} \omega(P) \phi_{G_{\parallel}\setminus P}(y)
        +
        \sum_{P \in \mathcal{P}(G_2)_{12}'} \omega(P) \phi_{G_{\parallel}\setminus P}(y)
        +
        \omega_{G_{\parallel}}(1,2) \phi_{G \setminus 1 \setminus 2}(y).
    \]
    We have used notation $\mathcal{P}(G)_{12}'$ to exclude the edge connecting $1$ and $2$.
    Since $\omega_{G_{\parallel}}(1,2) = \omega_{G_1}(1,2) + \omega_{G_2}(1,2)$, we restore the full sum
    \[
        \sum_{P \in \mathcal{P}(G_{\parallel})_{12}} \omega(P) \phi_{G_{\parallel}\setminus P}(y)
        =
        \sum_{P \in \mathcal{P}(G_1)_{12}} \omega(P) \phi_{G_{\parallel}\setminus P}(y)
        +
        \sum_{P \in \mathcal{P}(G_2)_{12}} \omega(P) \phi_{G_{\parallel}\setminus P}(y)
    \]
    From the above, we divide through by $\phi_{G \setminus 1 \setminus 2}(y)$ to obtain
    \[
        \nu(G_{\parallel},y) 
        =
        \sum_{P \in \mathcal{P}(G_1)_{12}} \omega(P) \frac{\phi_{G_1\setminus P}(y)}{\phi_{G_1 \setminus 1 \setminus 2}(y)}
        +
        \sum_{P \in \mathcal{P}(G_2)_{12}} \omega(P) \frac{\phi_{G_2\setminus P}(y)}{\phi_{G_2 \setminus 1 \setminus 2}(y)}
    \]
    These sums are just $\nu(G_1,y)$ and $\nu(G_2,y)$.
\end{proof}

\begin{remark}
    The quantities $\mu_1$, $\mu_2$ and $\nu$ are related to the admittance, the inverse of the impedance in an electrical circuit.
    Indeed, from \cite{pozarMicrowaveEngineering2012}, the $S$-matrix can be written in terms of the admittance matrix $Y$ as $S=(I+Y)^{-1}(I-Y)$.
    Expanding this gives 
    \[
        S(G, e^{ik}) = 
        -\left(
            \begin{array}{cc}
                \dfrac{(Y_{11}-1)(Y_{22}+1)-Y_{12}^2}{(Y_{11}+1)(Y_{22}+1)-Y_{12}^2} 
            & 
                \dfrac{2 Y_{12}}{(Y_{11}+1)(Y_{22}+1)-Y_{12}^2} 
            \\
                \dfrac{2 Y_{12}}{(Y_{11}+1)(Y_{22}+1)-Y_{12}^2} 
            & 
                \dfrac{(Y_{11}+1)(Y_{22}-1)-Y_{12}^2}{(Y_{11}+1)(Y_{22}+1)-Y_{12}^2}
            \end{array}
        \right).
    \]
    The transformation $Y_{12} = Y_{21} = -\frac{\nu}{i \sin{k}}$, $Y_{11} = \frac{\mu_2-\cos(k)}{i \sin{k}}$ and $Y_{22} = \frac{\mu_1-\cos(k)}{i \sin{k}}$ gives the direct change of variables.
\end{remark}

Consider the case where $G$ is a tree graph; that is, $G$ has no cycles. 
In this specific case, there is an $O(n\log^2(n))$ algorithm for determining the characteristic polynomial's coefficients, see \cite{furerEfficientComputationCharacteristic2014}, allowing for a fast calculation of the $S$-matrix. 
However, a quantum walk with two terminals and whose finite part is a tree graph reduces to a collection of tree graphs $T_1, T_2, \dots, T_N$ connected in series with quantum wires, each with both of its terminals on the same vertex. 
In other words, the entire system $\widehat{G}$ consists of a single doubly-infinite line, with `branches' at $N$ selected vertices. 
This significantly restricts the $S$-matrix; in particular, if we have perfect transmission then the phase of transmission is trivial, as can be deduced from \eqref{same-vertex-terminal}. 
Combining tree graphs in parallel, however, allows us to construct graphs with cycles, while still making use of the fast computation method available to tree graphs.

\subsection{Graphs in series}

For completeness, we give here the parameters $\mu_1$, $\mu_2$ and $\nu$ for graphs combined in series. 
Let ${G_1}$, ${G_2}$ denote graphs with two terminals, and let ${G_1}{G_2}$ denote the two-terminal graph obtained by identifying terminal 2 of ${G_1}$ with terminal 1 of ${G_2}$. 
The corresponding $S$-matrix for this setup can be obtained from standard scattering theory, as in \cite{childsUniversalComputationMultiparticle2013}. 
Alternatively, the formula for a characteristic polynomial of a graph with a cutpoint (see e.g. \cite[Chapter 4, Corollary 3.3]{godsilAlgebraicCombinatorics1993}) may be used to obtain the result directly from the characteristic polynomials of the constituent graphs, which we show below.

\begin{proposition}
    Consider a graph ${G_1}{G_2}$ as described above. 
    We have 
    \begin{equation} \label{mu-series}
        \mu_1({G_1}{G_2},y) = \mu_1({G_2},y) + 
        \frac{\nu({G_2},y)^2}{y-\mu_1({G_1},y)-\mu_2({G_2},y)};
    \end{equation}
    \begin{equation}
        \mu_2({G_1}{G_2},y) = \mu_2({G_1},y) + 
        \frac{\nu({G_1},y)^2}{y-\mu_1({G_1},y)-\mu_2({G_2},y)};
    \end{equation}
    \begin{equation}\label{nu-series}
        \nu({G_1}{G_2},y) = \frac{\nu({G_1},y)\nu({G_2},y)}{y-\mu_1({G_1},y)-\mu_2({G_2},y)} .
    \end{equation}
\end{proposition}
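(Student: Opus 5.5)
The plan is to express all four characteristic polynomials of $G_1G_2$ through those of $G_1$ and $G_2$ using the cut-vertex formula \cite[Chapter 4, Corollary 3.3]{godsilAlgebraicCombinatorics1993}. Then I substitute into the definitions \eqref{mu-def} and \eqref{nu-def} and simplify. Write $c$ for the cut vertex (terminal 2 of $G_1$, identified with terminal 1 of $G_2$). For a graph $H=H_1\cup H_2$ with $H_1\cap H_2=\{c\}$, the formula is
\[
\phi_H=\phi_{H_1}\phi_{H_2\setminus c}+\phi_{H_1\setminus c}\phi_{H_2}-y\,\phi_{H_1\setminus c}\phi_{H_2\setminus c}.
\]
It also follows from the Schwenk formula (Appendix) applied at $c$, since every neighbour and cycle through $c$ lies in exactly one side.

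I abbreviate $a=\phi_{G_1}$, $a_1=\phi_{G_1\setminus1}$, $a_2=\phi_{G_1\setminus2}$, $a_{12}=\phi_{G_1\setminus1\setminus2}$, and similarly $b,b_1,b_2,b_{12}$ for $G_2$.

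\textbf{Vertex-deleted polynomials.} Deleting both terminals of $G_1G_2$ (terminal 1 of $G_1$ and terminal 2 of $G_2$) leaves $c$ as a cut vertex of $(G_1\setminus1)\cup(G_2\setminus2)$. Hence
\[
\phi_{G_1G_2\setminus1\setminus2}=a_1b_{12}+a_{12}b_2-y\,a_{12}b_{12}.
\]
Deleting only terminal 1 gives $\phi_{G_1G_2\setminus1}=a_1b_1+a_{12}b-y\,a_{12}b_1$. Deleting only terminal 2 gives $\phi_{G_1G_2\setminus2}=a\,b_{12}+a_2b_2-y\,a_2b_{12}$.

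\textbf{Converting to $\mu,\nu$.} Divide through by $a_{12}b_{12}$ and use $a_1/a_{12}=y-\mu_1(G_1)$, $b_2/b_{12}=y-\mu_2(G_2)$, and so on. Set $D:=y-\mu_1(G_1)-\mu_2(G_2)$. The denominator becomes
\[
\phi_{G_1G_2\setminus1\setminus2}/(a_{12}b_{12})=(y-\mu_1(G_1))+(y-\mu_2(G_2))-y=D.
\]
For the numerator of $\mu_1(G_1G_2)$, I rewrite $b/b_{12}$ using the definition of $\nu(G_2)^2$:
\[
b\,b_{12}=b_1b_2-\nu(G_2)^2b_{12}^2,
\]
so $b/b_{12}=(y-\mu_1(G_2))(y-\mu_2(G_2))-\nu(G_2)^2$. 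Substituting and collecting terms, I expect
\[
\frac{\phi_{G_1G_2\setminus1}}{a_{12}b_{12}}=(y-\mu_1(G_2))D-\nu(G_2)^2 .
\]
Then $\mu_1(G_1G_2)=y-\bigl[(y-\mu_1(G_2))D-\nu(G_2)^2\bigr]/D$, which is \eqref{mu-series}. The formula for $\mu_2$ follows by the mirror computation, swapping the roles of $G_1$ and $G_2$.

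\textbf{The $\nu$ formula.} I would avoid computing the discriminant $\phi_{\setminus1}\phi_{\setminus2}-\phi\,\phi_{\setminus1\setminus2}$ directly. Instead I use the sum-over-paths formula \eqref{sumofpaths}. Every path from 1 to 2 in $G_1G_2$ passes through $c$ and splits uniquely as $P=P_1\cup P_2$, with $P_1$ a $1$–$c$ path in $G_1$ and $P_2$ a $c$–$2$ path in $G_2$. Moreover $G_1G_2\setminus P=(G_1\setminus P_1)\sqcup(G_2\setminus P_2)$. So the sum factorises as $\nu(G_1)a_{12}\cdot\nu(G_2)b_{12}$. Dividing by $\phi_{G_1G_2\setminus1\setminus2}=a_{12}b_{12}D$ gives \eqref{nu-series}, including the correct sign, since the path sum fixes the sign of the square root.

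\textbf{Main obstacle.} The main difficulty is the algebraic collapse in the $\mu_1$ numerator: substituting $b/b_{12}$ via $\nu(G_2)^2$ must make the result factor exactly as $(y-\mu_1(G_2))D-\nu(G_2)^2$. I would verify this term by term. Throughout, I assume $a_{12}$, $b_{12}$ and $D$ are nonzero at $y$, and treat other values by continuity of the rational functions.
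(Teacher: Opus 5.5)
Your proposal is correct and follows essentially the same route as the paper: the cut-vertex formula applied to the vertex-deleted subgraphs, the substitution $b/b_{12}=(y-\mu_1(G_2))(y-\mu_2(G_2))-\nu(G_2)^2$ that collapses the numerator to $(y-\mu_1(G_2))D-\nu(G_2)^2$, and the factorisation of the path sum through the cut vertex for $\nu$. Your algebra checks out, and you are slightly more explicit than the paper in two places: you show that $\phi_{G_1G_2\setminus1\setminus2}=a_{12}b_{12}D$, and you derive the cut-vertex formula from the appendix's weighted Schwenk formula.
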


\begin{proof}
    From \cite[Chapter 4, Corollary 3.3]{godsilAlgebraicCombinatorics1993}
    \[
        \phi_{{G_1}{G_2}} = \phi_{{G_1} \setminus 2} \phi_{{G_2}}  + \phi_{{G_1}} \phi_{{G_2} \setminus 1}  - y \, \phi_{{G_1} \setminus 2} \phi_{{G_2} \setminus 1}.
    \]
    As a result, we have
    \begin{align*}
        \mu_1({G_1}{G_2},y) &= y - 
        \frac{\phi_{{G_1} \setminus 1 \setminus 2} \phi_{{G_2}}  + \phi_{{G_1} \setminus 1} \phi_{{G_2} \setminus 1}  - y \, \phi_{{G_1} \setminus 1 \setminus 2} \phi_{{G_2} \setminus 1}}{\phi_{{G_1} \setminus 1 \setminus 2} \phi_{{G_2} \setminus 2}  + \phi_{{G_1} \setminus 1} \phi_{{G_2} \setminus 1 \setminus 2}  - y \, \phi_{{G_1} \setminus 1 \setminus 2} \phi_{{G_2} \setminus 1 \setminus 2}}
        \\
        &= y - 
        \frac{\phi_{{G_2}}/\phi_{{G_2} \setminus 1 \setminus 2}   + \left(\phi_{{G_1} \setminus 1}/\phi_{{G_1} \setminus 1 \setminus 2}\right) \left( \phi_{{G_2} \setminus 1}/\phi_{{G_2} \setminus 1 \setminus 2}\right)   - y \, \phi_{{G_2} \setminus 1}/\phi_{{G_2} \setminus 1 \setminus 2} }{\phi_{{G_2} \setminus 2}/\phi_{{G_2} \setminus 1 \setminus 2}   + \phi_{{G_1} \setminus 1}/\phi_{{G_1} \setminus 1 \setminus 2}   - y}.
    \end{align*}
    This can then be transformed using the identities $\phi_{{G_1}\setminus 1}/\phi_{{G_1}\setminus 1 \setminus 2} = y-\mu_1({G_1},y)$ and $\phi_{G_1}/\phi_{{G_1}\setminus 1 \setminus 2} = \left(y-\mu_1({G_1},y)\right)\left(y-\mu_2({G_1},y)\right) - \nu({G_1},y)^2$.
    The resulting expression is 
    \begin{align*}
        &y - 
        \frac{\left(y-\mu_1({G_2},y)\right)\left(y-\mu_2({G_2},y)\right) - \nu({G_2},y)^2   + (y-\mu_1({G_1},y)) (y-\mu_1({G_2},y))   - y (y-\mu_1({G_2},y)) }{y-\mu_1({G_1},y)-\mu_2({G_2},y)}
        \\
        &\qquad = y - (y-\mu_1({G_2},y)) + 
        \frac{\nu({G_2},y)^2}{y-\mu_1({G_1},y)-\mu_2({G_2},y)} 
        \\
        &\qquad = \mu_1({G_2},y) + 
        \frac{\nu({G_2},y)^2}{y-\mu_1({G_1},y)-\mu_2({G_2},y)},
    \end{align*}
    yielding \eqref{mu-series}.

    For $\nu({G_1}{G_2},y)$, we invoke the sum of paths formula \eqref{sumofpaths}; since each path between the two terminals of ${G_1}{G_2}$ must pass through the shared vertex, each such path can be decomposed into a path fully contained within ${G_1}$, and one fully contained within ${G_2}$. As a result,
    \begin{align*}
        &\sum_{P \in \mathcal{P}({G_1}{G_2})_{12}} \omega_{{G_1}{G_2}}(P) \phi_{{G_1}{G_2}\setminus P}(y) 
        \\
        &\hspace{2cm}= \sum_{P_{G_1} \in \mathcal{P}({G_1})_{12}}
         \sum_{P_{G_2} \in \mathcal{P}({G_2})_{12}} \omega_{{G_1}}(P_{G_1}) \, \omega_{{G_2}}(P_{G_2}) \phi_{{G_1}{G_2}\setminus P_{G_1}\setminus P_{G_2}}(y) 
        \\
        &\hspace{2cm}= \sum_{P_{G_1} \in \mathcal{P}({G_1})_{12}}\sum_{P_{G_2} \in \mathcal{P}({G_2})_{12}} \omega_{{G_1}}(P_{G_1}) \, \omega_{{G_2}}(P_{G_2}) \phi_{{G_1}\setminus P_{G_1}}(y) \phi_{{G_2}\setminus P_{G_2}}(y) 
        \\
        &\hspace{2cm}= \left(\sum_{P_{G_1} \in \mathcal{P}({G_1})_{12}} \omega_{{G_1}}(P_{G_1})  \phi_{{G_1}\setminus P_{G_1}}(y) \right) \left(\sum_{P_{G_2} \in \mathcal{P}({G_2})_{12}}  \omega_{{G_2}}(P_{G_2})\phi_{{G_2}\setminus P_{G_2}}(y) \right).
    \end{align*}
    This constitutes the numerator of $\nu({G_1}{G_2},y)$; the denominator is identical to that of $\mu_1({G_1}{G_2},y)$, and so we obtain \eqref{nu-series}.
\end{proof}

\section{Further properties}

\subsection{Poles}

In this section we investigate the case where $\phi_{G\setminus 1 \setminus 2}(y_0) = 0$ -- that is, $y_0$ is an eigenvalue of $H(G\setminus 1 \setminus 2)$. 
Let $m_{G}(\lambda)$ denote the multiplicity of eigenvalue $\lambda$ in $H(G)$.
The matrix $H(G)$ is real symmetric; as a result, the eigenvalues satisfy the interlacing theorem \cite[Theorem~5.3]{godsilAlgebraicCombinatorics1993}. 
Eigenvalue interlacing implies that $-1 \leq \left(m_{G\setminus 1}(y_0) - m_{G\setminus 1 \setminus 2}(y_0)\right) \leq 1$, and so
\[
    \mu_1(G,y) \sim (y-y_0)^{m} \text{ as } y\to y_0, \qquad  m \in \{-1,0,1\},
\]
similarly for $\mu_2(G,y)$.
Using eigenvalue interlacing again, we have $m_{G\setminus 1 \setminus 2}(y_0)-2 \leq m_{G}(y_0) \leq m_{G\setminus 1 \setminus 2}(y_0)+2$; this implies that
\[
    \nu(G,y) \sim (y-y_0)^{m'} \text{ as } y\to y_0, \qquad  m' \in \{-1,0,1\}.
\]
Hence, there is at worst a simple pole at $y_0$ in each of $\mu_1$, $\mu_2$ and $\nu$. 
Excepting the case of a pole, the analysis of the previous section is unchanged; one must simply replace $\mu_1$, $\mu_2$ and $\nu$ by their limiting values. 

If $\mu_1(G,y)$, $\mu_2(G,y)$ and/or $\nu(G,y)$ contain a simple pole, then 
we can repeat the previous analysis to obtain necessary and sufficient conditions for perfect transmission. 
Write the Laurent expansion around $y=y_0$ of these quantities as follows:
\[
    \nu(G,y) = \sum_{n=-1}^\infty (y-y_0)^n\nu_{n};
\]
similarly for $\mu_1(G,y)$ and $\mu_2(G,y)$. 
Consider the $S$-matrix at momentum $k$, and set $y=2\cos(k)$. 
Looking at the leading order, we will have $S_{11}(G,e^{ik}) = -1$ (i.e. perfect reflection) unless 
\begin{equation} \label{pole-pt-1}
    \mu_{1,-1}\mu_{2,-1} = \nu_{-1}^2.
\end{equation}
Next, looking at the imaginary part of the numerator at order $-1$, we obtain a necessary condition for perfect transmission:
\begin{equation} \label{pole-pt-2}
    \mu_{1,-1} = \mu_{2,-1}.
\end{equation}
From above, this implies that $\mu_{1,-1} = \pm \nu_{-1}$.
Hence, if there is a pole in one of the three functions, and we are to have perfect transmission, then all three must have a pole at the same value, and the residues must be equal up to a sign. 

From the real part of the numerator, we obtain 
\[
    \mu_{1,-1}\mu_{2,0}+\mu_{1,0}\mu_{2,-1}-2\nu_{-1}\nu_{0} = 2\cos(k) \, \mu_{1,-1},
\]
or,
\begin{equation} \label{pole-pt-3}
    (\mu_{2,0}+\mu_{1,0})\mp 2\nu_{0} = 2\cos(k).
\end{equation}
Now, turning to the denominator, we have 
\[
    (\mu_1-e^{-ik})(\mu_2-e^{-ik}) - \nu^2 =  
    \mu_{1,-1}\mu_{2,0}+\mu_{1,0}\mu_{2,-1}-2\nu_{-1}\nu_{0} - 2z^{-1}\mu_{1,-1} + O(1)
\]
Substituting in the above necessary condition, this becomes
\[
    2i\sin(k)\mu_{1,-1} + O(1).
\]
As established above, $\mu_{1,-1}$ is nonzero, so the equations \eqref{pole-pt-1}, \eqref{pole-pt-2}, and \eqref{pole-pt-3} constitute necessary and sufficient conditions for perfect transmission in the presence of a pole. 
Further the phase of transmission is given by 
\[
    S_{12}(G,z) = \frac{2i\sin(k)\nu_{-1}}{2i\sin(k)\mu_{1,-1}} = \pm1.
\]
Hence, in the case of a pole, the phase of transmission is trivial.

\subsection{The effective length}

So far, we have investigated how the phase of transmission $e^{i\theta}$ is related to the quantities $\nu$, $\mu_1$ and $\mu_2$.
In the context of quantum computation, we must also calculate the effective length. 
In essence, the effective length tells us the time taken for a wavepacket of momentum $k$ travelling through the graph; this is essential for calibrating the unitary gate, as the wavepackets associated to both the $\ket{0}$ and $\ket{1}$ states must move in unison. 
In the following, we derive a formula for the effective length of a graph that exhibits perfect transmission in terms of $\mu_1$, $\mu_2$ and $\nu$, and their derivatives. 

The effective length between the two terminals is defined as
\newcommand{\dd}{\mathrm{d}}
\newcommand{\ddk}[1]{\frac{\dd #1}{\dd k}}
\[
    \ell_{12}(G,k):=\ddk{} \arg S_{12}(G,e^{ik}) = -i \ddk{} \ln \frac{S_{12}(G,e^{ik})}{|S_{12}(G,e^{ik})|};
\]
here the branch of the logarithm is irrelevant due to the derivative. 
Now assume that $G$ exhibits perfect transmission at $k=k_\star$. 
Since 1 is the global maximum value of $|S_{12}(G,e^{ik})|$ as a function of $k$, the derivative, and consequently the derivative of $\ln |S_{12}(G,e^{ik})|$ vanishes at $k=k_\star$. 

For what remains, recall from \eqref{s12-munu}
\[
    S_{12}(G,e^{ik}) = \frac{2i\sin(k)\nu}{(\mu_1-e^{-ik})(\mu_2-e^{-ik})-\nu^2},
\]
where $\nu=\nu(G,2\cos(k))$, $\mu_1=\mu_1(G,2\cos(k))$ and $\mu_2=\mu_2(G,2\cos(k))$.
We begin by simply taking the logarithm, and then the derivative, yielding
\begin{align}
    \ddk{} \ln S_{12}(G,e^{ik})
    &= \ddk{}\ln (2i\sin(k)\nu) - \ddk{}\ln((\mu_1-e^{-ik})(\mu_2-e^{-ik})-\nu^2) \nonumber
    \\
    &= \cot(k) +  \frac{\dot\nu}{\nu} - \frac{(\dot\mu_1+i e^{-ik})(\mu_2-e^{-ik})+(\mu_1-e^{-ik})(\dot\mu_2+ie^{-ik})-2\nu\dot\nu}{(\mu_1-e^{-ik})(\mu_2-e^{-ik})-\nu^2}, \label{eff-length-initial}
\end{align}
where $\dot{\vphantom{o}}$ here denotes the derivative with respect to $k$. 
Let us evaluate this at $k=k_\star$, in which case we have the simplification $\mu_1 = \mu_2$, and 
\[
    (\mu_1-e^{-ik_\star})(\mu_2-e^{ik_\star})-\nu^2 = 0.
\]
As a result, the denominator in \eqref{eff-length-initial} becomes
\[
    (\mu_1-e^{-ik_\star})(\mu_2-e^{-ik_\star})-\nu^2 = 2i\sin(k_\star)(\mu_1-e^{-ik_\star}).
\]
Hence, there are a number of cancellations
\begin{align*}
    \ddk{} \ln S_{12}(G,e^{ik}) \bigg|_{k=k_\star}
    &= \cot(k_\star) +  \frac{\dot\nu}{\nu} 
    -
    \frac{\dot\mu_1+ie^{-ik_\star}}{2i\sin(k_\star)}
    -
    \frac{\dot\mu_2+ie^{-ik_\star}}{2i\sin(k_\star)}
    +
    \frac{2\nu\dot\nu}{2i\sin(k_\star)(\mu_1-e^{-ik_\star})}
\end{align*}
We now apply the chain rule to write the $k$ derivatives in terms of $y$-derivatives. 
Let $\vphantom{a}'$ denote the derivative with respect to $y$.
Then $\dot\nu = -2\sin(k)\nu'$; similarly for $\dot\mu_1$ and $\dot\mu_2$. As a result, 
\begin{align*}
    \ddk{} \ln S_{12}(G,e^{ik}) \bigg|_{k=k_\star}
    &= i -2\sin(k_\star)  \frac{\nu'}{\nu} 
    +
    \frac{\mu_1'+\mu_2'}{i}
    -
    \frac{2\nu\nu'}{i(\mu_1-e^{-ik_\star})}
\end{align*}
Finally, we recall the formulae \eqref{pt-phase} and \eqref{pt-param-munu}. Multiplying the result by $-i$, we obtain the effective length
\begin{equation} \label{efflength}
    \ell_{12}(G,k_\star)
    =  -(\mu_1' + \mu_2') + 2\cos(\theta)\nu'  + 1.
\end{equation}

We emphasize that the above formula is only valid when we have perfect transmission. However, $\mu_1$, $\mu_2$ and $\nu$ each combine additively for parallel graphs, and by extension so do their derivatives. 
Thus, the calculation of the effective length of a combined graph can be simplified, regardless of whether the constituent graphs themselves exhibit perfect transmission. 
Indeed, suppose we combine graphs $G_1,G_2,\dots,G_N$ in parallel to form $G_{\parallel}$, which exhibits perfect transmission at $k_\star$. 
Then, the effective length between the two terminals of $G_{\parallel}$ is given by 
\[
    \ell_{12}(G_{\parallel},k_\star) = -\sum_{j=1}^N (\mu_{1}'(G_j,2\cos(k_\star)) + \mu_{2}'(G_j,2\cos(k_\star))) + 2\cos(\theta) \sum_{j=1}^N \nu'(G_j,2\cos(k_\star)) + 1.
\]

\section{Examples}

In this section, we give examples of structures that can be formed by composing graphs in parallel, and the corresponding calculations using the quantities $\mu_1$, $\mu_2$ and $\nu$, at various values of the operating momentum $k$. 

\begin{example}
    Consider the case $k=-\pi/4$, which is the proposed operating momentum in \cite{childsUniversalComputationQuantum2009}.
    For a path graph of length $\ell$, perfect transmission is automatic, and the phase of transmission is simply $e^{ik\ell}$. 
    This allows us to read the values of $\mu$ and $\nu$ from the parametrisation of the perfect transmission hyperbola \eqref{pt-param-munu}, yielding the results below.
    \[
        \begin{array}{c|cc}
            \ell & \mu & \nu \\
            \hline
            1 & 0 & -1 \\
            2 & \sqrt{2}/2 & -\sqrt{2}/2 \\
            3 & \sqrt{2} & -1 \\
            4 & \infty & -\infty \\
        \end{array}
        \qquad 
        \begin{array}{c|cc}
            \ell & \mu & \nu \\
            \hline
            5 & 0 & 1 \\
            6 & \sqrt{2}/2 & \sqrt{2}/2 \\
            7 & \sqrt{2} & 1 \\
            8 & \infty & \infty \\
        \end{array}
    \]
    Combining paths in parallel corresponds to taking non-negative integer linear combinations of the above values.
    In this way, we can build a graph of parallel paths which has perfect transmission at momentum $-\pi/4$.
    An example of such a graph is given by four paths of length 3 and nine paths of length 5; see Figure~\ref{fig:parallelpaths}. 
    In this case, $\mu = 4\sqrt{2}$ and $\nu = 5$, which can be confirmed to satisfy \eqref{pt-hyp}.

    From Figure~\ref{fig:parallelpaths}, we also see visually that two paths of length 5, and one path of length 3 combined in parallel gives a graph with perfect transmission. 
    Additionally, from \eqref{rem:perfect-reflection}, we see that the setup made by composing one of each length path, thus creating a cycle of length 8, is perfectly reflective at $k=-\pi/4$.  
    
    \begin{figure}[ht]
        \centering
        \includegraphics{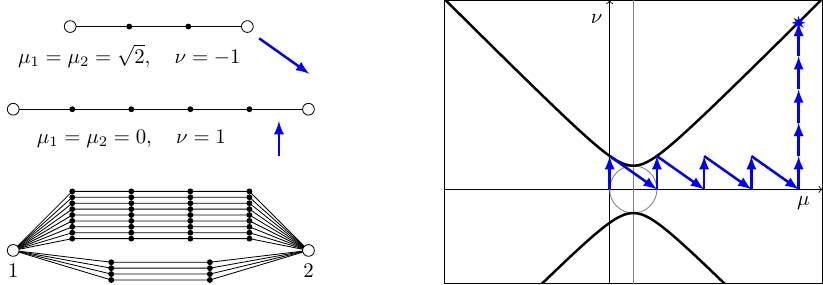}
        \caption{A visual explanation of the perfect transmission property of a two-terminal graph consisting of parallel paths at momentum $k=-\pi/4$. Each path is represented as a vector in the $\mu$-$\nu$ plane, and their parallel composition corresponds to the sum of these vectors, landing precisely on the perfect transmission hyperbola. }
        \label{fig:parallelpaths}
    \end{figure}

\end{example}

\begin{example}
    Consider the graphs below. 
    \begin{center}
        \includegraphics{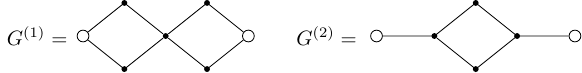}
    \end{center}
    For these graphs, the values of $\mu$ and $\nu$ are given explicitly by
    \begin{gather}
        \mu(G^{(1)},y) = \frac{2(y^2-2)}{y(y^2-4)} ;
        \qquad 
        \nu(G^{(1)},y) = \frac{4}{y(y^2-4)};
        \\
        \mu(G^{(2)},y) = \frac{2y}{y^2-1} ;
        \qquad 
        \nu(G^{(2)},y) = \frac{2y}{y^2-1}.
    \end{gather}

    Now define $G$ as the graph obtained by adjoining these graphs in parallel.
    The functions $\mu(G,y)$ and $\nu(G,y)$ of this graph are simply the sum of those for $G^{(1)}$ and $G^{(2)}$ given above.
    Evaluating \eqref{pt-hyp} reveals that this graph has perfect transmission only if $y=\pm\sqrt{3}$, that is, $k=-\pi/6$ or $k=-5\pi/6$.
    See Figure~\ref{fig:ex-pi6} for a graphical explanation. 
    At this point, $(\mu(G,\sqrt{3}),\nu(G,\sqrt{3})) = (1/\sqrt{3},-1/\sqrt{3})$, resulting in a phase angle of $\theta = 2\pi/3$. 
    Finally, using formula \eqref{efflength}, the effective length at $k=-\pi/6$ can be found to be 11. 
    \begin{figure}[ht]
        \centering
        \raisebox{0.5\height}{\includegraphics[scale=1.5]{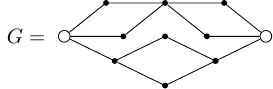}}
        \hfill
        {\setlength{\fboxsep}{0pt}%
        \fbox{\includegraphics[scale=1.0]{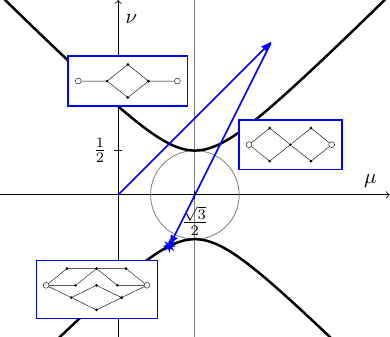}}}
        \caption{Left: the graph obtained by combining graphs $G^{(1)}$ and $G^{(2)}$ in parallel. Right: a graphical explanation of perfect transmission exhibited by $G$ at $k=-\pi/6$. The vector $(\mu,\nu)$ for each graph is plotted as a blue line. The sum of the two vectors for $G^{(1)}$ and $G^{(2)}$ lands on the perfect transmission hyperbola, indicated by the blue star.}
        \label{fig:ex-pi6}
    \end{figure}
\end{example}

\begin{example}

The setup depicted in Figure~\ref{fig:pi4phasegate} produces a phase gate with irrational phase for $k=-\pi/4$. 
Denoting the graph connecting $1_{\mathrm{in}}$ and $1_{\mathrm{out}}$ by $G$, we find
\[
    \mu(G,2\cos\left(\tfrac{\pi}{4}\right)) = \frac{1}{4} + \frac{\sqrt{2}}{4},
    \qquad
    \nu(G,2\cos\left(\tfrac{\pi}{4}\right)) = -\frac34.
\]
From \eqref{pt-hyp}, the graph $G$ exhibits perfect transmission at $k=-\pi/4$, and the resulting 
value of the transmission phase is, from \eqref{pt-phase},
\[
    S_{12}(G,e^{-i\pi/4}) = -\frac13-\frac{2\sqrt{2}}{3}i.
\]
Further, the derivatives of $\mu$ and $\nu$ are 
\[
    \mu'(G,2\cos\left(\tfrac{\pi}{4}\right)) = -\frac{5}{2} + \frac{\sqrt{2}}{4}; 
    \qquad \nu'(G,2\cos\left(\tfrac{\pi}{4}\right)) = -\frac{3\sqrt{2}}{4}.
\]
Using \eqref{efflength}, we obtain an effective length of 6. 
Therefore, if a quantum state, encoded using dual rail encoding as a pair of wavepackets, is sent through the setup in Figure~\ref{fig:pi4phasegate}, the resulting phase difference of the two wavepackets will be approximately $\psi = \theta - (-3\pi/4) \mod 2\pi$.

\begin{figure}[ht]
    \centering
    \includegraphics{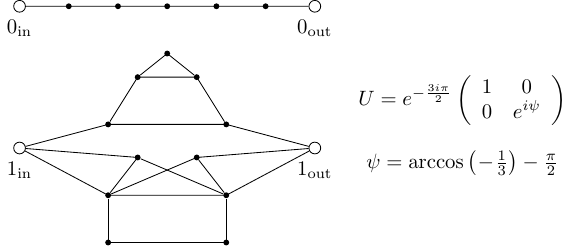}
    \caption{Phase gate for operating momentum $k=-\pi/4$.
    The resulting phase angle is equal to $\psi = \arccos(-1/3) - \pi/2\approx 0.108173\pi$.}
    \label{fig:pi4phasegate}
\end{figure}
    
\end{example}

\section{Conclusion and Discussion}

In this paper, we developed a characteristic-polynomial description of
two-terminal scattering for continuous-time quantum walks on graphs.  The
two-terminal $S$-matrix was expressed in terms of the characteristic
polynomial of the finite graph and those of its vertex-deleted subgraphs.
We also introduced the quantities $\mu_1$, $\mu_2$, and $\nu$, which
reorganize these formulae into an admittance-like form.  In this
parametrization, perfect transmission at a fixed momentum is described by
$\mu_1=\mu_2$ and, in the symmetric case, by a hyperbola in the
$(\mu,\nu)$-plane.  The point on this hyperbola determines the transmission
phase.

The main benefit of this formulation is that parallel composition becomes
vector addition in the $(\mu_1,\mu_2,\nu)$-space.  Hence, the search for
graphs with prescribed transmission properties is reduced to a geometric
vector-sum problem for smaller building blocks.

There are several natural directions for further work.  The first is to
extend the present formalism to graphs with more than two terminals.  This is
necessary for a direct treatment of general quantum gates.  The analogy with
electrical admittance suggests that such an extension should exist, but the
structure of the corresponding multi-terminal parameters remains to be
clarified.

Another direction is to make the relation with electric circuits more
explicit.  Quantum-walk scattering gadgets have been proposed as 
electric-circuit
realizations of universal quantum gates, using LC telegrapher circuits
\cite{ezawaElectricCircuitsUniversal2020}.  Related quantum walks with tails are known to
admit descriptions in terms of Kirchhoff-type laws and electric currents
\cite{higuchiElectricCircuitInduced2020}.  It would be interesting to clarify whether the
admittance-like quantities introduced here can be interpreted in a similar
circuit-theoretic framework, and whether the characteristic-polynomial
formulae obtained in this paper have a direct circuit interpretation.

It would also be interesting to apply the present scattering formalism to
topological quantum walks.  Topological phases of quantum walks have been
studied in one-dimensional systems with chiral symmetry
\cite{kitagawaExploringTopologicalPhases2010,asbothBulkboundaryCorrespondenceChiral2013}.  In such systems, topological
information is often encoded in the winding of a phase, and scattering-matrix
formulations of topological invariants are known both for topological
insulators and for discrete-time quantum walks
\cite{fulgaScatteringTheoryTopological2012,tarasinskiScatteringTheoryTopological2014}.  Since our formulae
express the two-terminal $S$-matrix in terms of characteristic polynomials,
they may give a finite-graph approach to scattering phases, winding numbers,
and possible edge-state contributions.

A related example is an Aharonov--Bohm ring.  Such a system can be viewed as
a two-terminal graph with parallel paths and magnetic phases in the edge
weights.  At the level of the path-sum formula, the magnetic flux enters
through the products of edge weights along paths, and therefore affects the
transmission amplitude and phase.  Extending the present real-weight
parametrization to this setting may lead to flux-controlled phase gates and
momentum filters.  We leave these problems for future work.

\section*{Acknowledgments}
%
This work was partially supported by JSPS KAKENHI Grant Numbers
JP26K17057, JP26K17055, and JP24K06889.

\printbibliography

@article{asbothBulkboundaryCorrespondenceChiral2013,
  title = {Bulk-Boundary Correspondence for Chiral Symmetric Quantum Walks},
  author = {Asb{\'o}th, J{\'a}nos K. and Obuse, Hideaki},
  year = 2013,
  month = sep,
  journal = {Phys. Rev. B},
  volume = {88},
  number = {12},
  pages = {121406},
  publisher = {American Physical Society},
  doi = {10.1103/PhysRevB.88.121406},
  urldate = {2026-05-03}
}

@article{childsLevinsonsTheoremGraphs2012,
  title = {Levinson's Theorem for Graphs {{II}}},
  author = {Childs, Andrew M. and Gosset, David},
  year = 2012,
  month = oct,
  journal = {Journal of Mathematical Physics},
  volume = {53},
  number = {10},
  eprint = {1203.6557},
  primaryclass = {math-ph, physics:quant-ph},
  pages = {102207},
  issn = {0022-2488, 1089-7658},
  doi = {10.1063/1.4757665},
  urldate = {2024-08-27},
  archiveprefix = {arXiv}
}

@article{childsMomentumSwitches2015,
  title = {Momentum {{Switches}}},
  author = {Childs, Andrew M. and Gosset, David and Nagaj, Daniel and Raha, Mouktik and Webb, Zak},
  year = 2015,
  journal = {Quantum Information and Computation},
  volume = {15},
  number = {7\&8},
  pages = {0601--0621},
  urldate = {2026-05-01}
}

@article{childsUniversalComputationMultiparticle2013,
  title = {Universal Computation by Multi-Particle Quantum Walk},
  author = {Childs, Andrew M. and Gosset, David and Webb, Zak},
  year = 2013,
  month = feb,
  journal = {Science},
  volume = {339},
  number = {6121},
  eprint = {1205.3782},
  primaryclass = {quant-ph},
  pages = {791--794},
  issn = {0036-8075, 1095-9203},
  doi = {10.1126/science.1229957},
  urldate = {2024-08-07},
  archiveprefix = {arXiv}
}

@article{childsUniversalComputationQuantum2009,
  title = {Universal {{Computation}} by {{Quantum Walk}}},
  author = {Childs, Andrew M.},
  year = 2009,
  month = may,
  journal = {Phys. Rev. Lett.},
  volume = {102},
  number = {18},
  pages = {180501},
  issn = {0031-9007, 1079-7114},
  doi = {10.1103/PhysRevLett.102.180501},
  urldate = {2024-08-07},
  copyright = {http://link.aps.org/licenses/aps-default-license},
  langid = {english}
}

@article{cottrellSimpleMethodFinding2015,
  title = {A Simple Method for Finding the Scattering Coefficients of Quantum Graphs},
  author = {Cottrell, Seth S.},
  year = 2015,
  month = sep,
  volume = {56},
  number = {9},
  pages = {092203--092203},
  issn = {0022-2488},
  doi = {10.1063/1.4931082},
  urldate = {2025-05-06},
  langid = {english}
}

@article{coutinhoIrrationalQuantumWalks2023,
  title = {Irrational {{Quantum Walks}}},
  author = {Coutinho, Gabriel and Baptista, Pedro Ferreira and Godsil, Chris and Spier, Thom{\'a}s Jung and Werner, Reinhard},
  year = 2023,
  month = sep,
  journal = {SIAM J. Appl. Algebra Geometry},
  volume = {7},
  number = {3},
  pages = {567--584},
  issn = {2470-6566},
  doi = {10.1137/22M1521262},
  urldate = {2026-05-05},
  langid = {english}
}

@article{coutinhoQuantumWalksNot2022,
  title = {Quantum Walks Do Not like Bridges},
  author = {Coutinho, Gabriel and Godsil, Chris and Juliano, Emanuel and {van Bommel}, Christopher M.},
  year = 2022,
  month = nov,
  journal = {Linear Algebra and its Applications},
  volume = {652},
  pages = {155--172},
  issn = {0024-3795},
  doi = {10.1016/j.laa.2022.07.009},
  urldate = {2025-04-24}
}

@article{ezawaElectricCircuitsUniversal2020,
  title = {Electric Circuits for Universal Quantum Gates and Quantum {{Fourier}} Transformation},
  author = {Ezawa, Motohiko},
  year = 2020,
  month = jun,
  journal = {Phys. Rev. Res.},
  volume = {2},
  number = {2},
  pages = {023278},
  publisher = {American Physical Society},
  doi = {10.1103/PhysRevResearch.2.023278},
  urldate = {2025-06-13}
}

@article{fulgaScatteringTheoryTopological2012,
  title = {Scattering Theory of Topological Insulators and Superconductors},
  author = {Fulga, I. C. and Hassler, F. and Akhmerov, A. R.},
  year = 2012,
  month = apr,
  journal = {Phys. Rev. B},
  volume = {85},
  number = {16},
  pages = {165409},
  issn = {1098-0121, 1550-235X},
  doi = {10.1103/PhysRevB.85.165409},
  urldate = {2026-05-03},
  copyright = {http://link.aps.org/licenses/aps-default-license},
  langid = {english}
}

@article{furerEfficientComputationCharacteristic2014,
  title = {Efficient {{Computation}} of the {{Characteristic Polynomial}} of a {{Tree}} and {{Related Tasks}}},
  author = {F{\"u}rer, Martin},
  year = 2014,
  month = mar,
  journal = {Algorithmica},
  volume = {68},
  number = {3},
  pages = {626--642},
  issn = {1432-0541},
  doi = {10.1007/s00453-012-9688-5},
  urldate = {2026-05-05},
  langid = {english}
}

@book{godsilAlgebraicCombinatorics1993,
  title = {Algebraic {{Combinatorics}}},
  author = {Godsil, C. D.},
  year = 1993,
  publisher = {Chapman \textbackslash\& Hall}
}

@article{godsilToolsLinearAlgebra1989,
  title = {Tools from Linear Algebra},
  author = {Godsil, Chris},
  year = 1989,
  month = jan,
  journal = {Handbook of Combinatorics},
  volume = {2}
}

@article{hararyDeterminantAdjacencyMatrix1962,
  title = {The {{Determinant}} of the {{Adjacency Matrix}} of a {{Graph}}},
  author = {Harary, Frank},
  year = 1962,
  month = jul,
  journal = {SIAM Rev.},
  volume = {4},
  number = {3},
  pages = {202--210},
  issn = {0036-1445, 1095-7200},
  doi = {10.1137/1004057},
  urldate = {2026-04-28},
  langid = {english}
}

@article{higuchiElectricCircuitInduced2020,
  title = {Electric {{Circuit Induced}} by {{Quantum Walk}}},
  author = {Higuchi, Yusuke and Sabri, Mohamed and Segawa, Etsuo},
  year = 2020,
  month = oct,
  journal = {J Stat Phys},
  volume = {181},
  number = {2},
  pages = {603--617},
  issn = {0022-4715, 1572-9613},
  doi = {10.1007/s10955-020-02591-3},
  urldate = {2025-06-10},
  langid = {english}
}

@article{kitagawaExploringTopologicalPhases2010,
  title = {Exploring Topological Phases with Quantum Walks},
  author = {Kitagawa, Takuya and Rudner, Mark S. and Berg, Erez and Demler, Eugene},
  year = 2010,
  month = sep,
  journal = {Phys. Rev. A},
  volume = {82},
  number = {3},
  pages = {033429},
  issn = {1050-2947, 1094-1622},
  doi = {10.1103/PhysRevA.82.033429},
  urldate = {2026-05-03},
  copyright = {http://link.aps.org/licenses/aps-default-license},
  langid = {english}
}

@book{pozarMicrowaveEngineering2012,
  title = {Microwave Engineering},
  author = {Pozar, David M.},
  year = 2012,
  edition = {4},
  publisher = {John Wiley \& Sons},
  urldate = {2026-05-05},
  isbn = {978-0-470-63155-3}
}

@article{sachsBeziehungenZwischenGraphen1964,
  title = {Beziehungen Zwischen Den in Einem {{Graphen}} Enthaltenen {{Kreisen}} Und Seinem Charakteristischen {{Polynom}}},
  author = {Sachs, Horst},
  year = 1964,
  journal = {Publicationes Mathematicae Debrecen},
  volume = {11},
  number = {1-4},
  pages = {119--134},
  publisher = {University of Debrecen/Debreceni Egyetem},
  urldate = {2026-04-28}
}

@inproceedings{schwenkComputingCharacteristicPolynomial1974,
  title = {Computing the {{Characteristic}} Polynomial of a Graph},
  booktitle = {Graphs and {{Combinatorics}}},
  author = {Schwenk, Allen J.},
  editor = {Bari, Ruth A. and Harary, Frank},
  year = 1974,
  pages = {153--172},
  publisher = {Springer},
  address = {Berlin, Heidelberg},
  doi = {10.1007/BFb0066438},
  isbn = {978-3-540-37809-9},
  langid = {english}
}

@inproceedings{schwenkRemovalcospectralSetsVertices1979,
  title = {Removal-Cospectral Sets of Vertices in a Graph},
  booktitle = {Proc. {{Tenth Southeastern Conference}} on {{Combinatorics}}, {{Graph Theory}} and {{Computing}}},
  author = {Schwenk, Allen J.},
  year = 1979,
  pages = {849--860}
}

@article{tarasinskiScatteringTheoryTopological2014,
  title = {Scattering Theory of Topological Phases in Discrete-Time Quantum Walks},
  author = {Tarasinski, B. and Asb{\'o}th, J. K. and Dahlhaus, J. P.},
  year = 2014,
  month = apr,
  journal = {Phys. Rev. A},
  volume = {89},
  number = {4},
  pages = {042327},
  issn = {1050-2947, 1094-1622},
  doi = {10.1103/PhysRevA.89.042327},
  urldate = {2026-05-05},
  copyright = {http://link.aps.org/licenses/aps-default-license},
  langid = {english}
}

@inproceedings{tutteAllKingsHorses,
  title = {All the King's Horses},
  booktitle = {Graph Theory and Related Topics},
  author = {Tutte, William Thomas},
  pages = {15--33},
  publisher = {Academic Press, Inc},
  address = {University of Waterloo, Ontario}
}

\pagebreak

\appendix

\section{Schwenk formula for weighted graphs} \label{schwenk}

Here we give a proof of the Schwenk recursion formula for the characteristic polynomial of a weighted graph with Hermitian weights. 
We give the proof in exactly the same way as the original proof by Schwenk \cite{schwenkComputingCharacteristicPolynomial1974}, but using the weighted version of the Harary-Sachs formula \cite{sachsBeziehungenZwischenGraphen1964} for the coefficients of the characteristic polynomial.
Let $D$ be a directed, weighted graph, with weight function $\omega : V \times V \to \mathbb{C}$. 
Later we will specialize to the case $\omega(u,v) = \omega(v,u)^*$.

The Harary-Sachs formula in the case of a weighted directed graph may be deduced from the decomposition of the coefficients of the characteristic polynomial into principal minors, and the Harary formula \cite{hararyDeterminantAdjacencyMatrix1962} for the determinant of an adjacency matrix. 
Let $S_r(D)$ denote the set of subgraphs of $D$ whose connected components are directed cycles.
For each directed graph $s \in S_r(D)$, let $p(s)$ denote the number of connected components of s.
Writing $\phi_D(y) = \sum_{r=0}^n a_r(D) y^{n-r}$ then we have 
\[
    a_r(D) = \sum_{s \in S_r(D)} (-1)^{p(s)}
    \prod_{C \subset s} \omega(C),
\]

Now decompose the sum by choosing a particular vertex $v$ and conditioning on whether $v$ is a member of $S_r$,
\[
    a_r(D) = \left(\sum_{s \in S_r(D\setminus v)}
    + \sum_{s \in S_r(D) | v \in s}\right)
     (-1)^{p(s)} 
    \prod_{C \subset s} \omega(C).
\]
The first sum is just $a_r(D\setminus v)$. 
For the other sum,
\begin{multline*}
    \sum_{s \in S_r(D) | v \in  s}
    (-1)^{p(s)}
    \prod_{C \subset s} \omega(C)
    \\=
    -\sum_{C\in \mathcal{C}(D) | v \in C}
    \omega(C)
    \sum_{s \in S_{r-|C|}(D\setminus C)}
    (-1)^{p(s)} 
    \prod_{C' \subset s} \omega(C')
    \\=
    -\sum_{C\in \mathcal{C}(D)  | v \in C}
    \omega(C) a_{r-|C|}(D\setminus C).
\end{multline*}
Putting the two formulae together, we have
\[
    a_r(D) = a_r(D\setminus v)
    -\sum_{C\in \mathcal{C}(D) | v \in C}
    \omega(C) a_{r-|C|}(D\setminus C).
\]

We now specialize to the case $\omega(v,u) = \omega(u,v)^*$, which we consider to be an undirected weighted graph $G$. 
The main difference is the choice of language; an undirected cycle in $G$, when considered as a directed graph, contains both the forward and reversed cycles, with weights $\omega(C)$ and $\omega(C)^*$.
Additionally, any edge $\{u,v\}$ in $G$ corresponds to a cycle of two vertices in the directed graph, with weight $|\omega(u,v)|^2$. 
This leads to the following decomposition:
\[
    a_r(G) = a_r(G\setminus v)
    -\sum_{u \sim v}
    |\omega(u,v)|^2 \, a_{r-2}(G\setminus u \setminus v)
    -2\sum_{C\in \mathcal{C}(G) | v \in C}
    \Re\left(\omega(C)\right) \, a_{r-|C|}(G\setminus C).
\]
Now multiply by $y^{n-r}$ and sum over $r$. 
Looking at the difference between the number of vertices of the graph and the subscript, we see that $a_{r}(G\setminus v)$ will pick up a factor of $y$, so
\begin{equation} \label{schwenk-hermitian}
    \phi_{G}(y) = y \phi_{G\setminus v}(y) 
    - \sum_{u \sim v}  |\omega(u,v)|^2 \phi_{G\setminus u \setminus v}(y)
    - 2\sum_{C\in \mathcal{C}(G) | v \in C}  \Re\left(\omega(C)\right) \phi_{G\setminus C}(y).
\end{equation}

\end{document}